\documentclass[submission,copyright,creativecommons]{eptcs}
\usepackage{breakurl}             
\usepackage{underscore}           

\title{Attainable Knowledge and Omniscience}
\author{Pavel Naumov
\institute{King's College\\ Pennsylvania, United States}
\email{pgn2@cornell.edu}
\and
Jia Tao
\institute{Lafayette College\\
Pennsylvania, United States}
\email{taoj@lafayette.edu}
}

\usepackage{amsmath}
\usepackage{wasysym}
\usepackage{amssymb}
\usepackage{amsmath}
\usepackage{graphicx}

\newtheorem{theorem}{Theorem}
\newtheorem{lemma}{Lemma}

\newtheorem{definition}{Definition}

\renewcommand{\phi}{\varphi}
\renewcommand{\epsilon}{\varepsilon}

\renewcommand{\>}{\rangle}
\newenvironment{proof}{\noindent{\sc Proof.}}{\hfill $\boxtimes\hspace{2mm}$\linebreak}

\newenvironment{proof-of-claim}{\noindent{\sc Proof of Claim.}}{\hfill $\boxtimes\hspace{2mm}$\linebreak}

\renewcommand{\phi}{\varphi}
\newcommand{\K}{{\sf K}}
\newcommand{\B}{{\sf B}}
\renewcommand{\boxdot}{{\sf A}}
\renewcommand{\Box}{{\sf O}}




\begin{document}
\maketitle

\begin{abstract}
The paper investigates an evidence-based semantics for epistemic logics. It is shown that the properties of knowledge obtained from a potentially infinite body of evidence are described by modal logic S5. At the same time, the properties of knowledge obtained from only a finite subset of this body are described by modal logic S4. The main technical result is a sound and complete bi-modal logical system that describes properties of these two modalities and their interplay.
\end{abstract}

\maketitle

\section{Introduction}


The distinction between {\em potential} and {\em actual} infinity goes back to Zeno's paradoxes. The former refers to constructions involving a potentially unlimited, but finite, number of steps while the latter refers to an infinite set as a whole.

There has been a long tradition of distinguishing these two types of infinity in logic. We say that a set is decidable if the membership in the set can be verified through a potentially unlimited but finite number of steps of a Turing machine. Most people believe that Peano Arithmetic is consistent and, thus, its consistency could be verified by observing that none of the  countably many finite-length derivations ends with a contradiction. Yet, G\"{o}del's second incompleteness theorem claims that the consistency cannot be established by a single finite-length derivation or a finite set of such derivations. More generally, logicians usually accept admissible inference rules with finitely many assumptions and reject $\omega$-rules that deduce statements from infinitely many assumptions. A compactness property for a logical system states that if a statement semantically follows from an infinite set of assumptions, then it semantically follows from a potentially unlimited but finite subset of the assumptions.

In this paper we investigate the difference between the knowledge that can be derived from a potentially unlimited, but finite, subset of the body of  evidence and the knowledge that can be inferred from the whole infinite body of evidence. We refer to the first form of knowledge as {\em attainable knowledge} and to the second as {\em omniscience}. We show that the properties of the attainable knowledge are captured by modal logic S4, while those of the omniscience are given by modal logic S5. 

\subsection{Epistemic Logics S4 and S5}\label{S4 S5 subsection}

Epistemic logic S5 is an extension of logic S4 by the Negative Introspection axiom: $\neg\K\phi\to\K\neg\K\phi$, where $\K$ is the knowledge modality. Informally, this axiom expresses an agent awareness of the limits of her knowledge: if an agent does not know something, then she must know that she does not know it. Even without taking into account the size of the body of supporting evidence, the validity of this axiom has long been a subject for philosophical discussions. Hintikka~\cite{h62} rejects this axiom. Stalnaker~\cite{s06ps} cites a belief-based argument against it. The argument relies on three standard assumptions about knowledge and beliefs: (i) beliefs of an agent are consistent, (ii) an agent believes in anything she knows, and (iii) anything an agent knows is true.
We interpret Stalnaker's argument in terms of a mathematician's belief about her knowledge. Imagine a mathematician who spent last several years working on a conjecture $\phi$. One day she announces ``I believe I now know that the conjecture is true''\footnote{In Stalkner's words, ``agent \dots take[s] herself to know'' that the conjecture is true~\cite[p.177]{s06ps}.}. We write this as $\B\K\phi$. Because her beliefs are consistent, we must conclude that she does not believe in the opposite:  $\neg\B\neg\K\phi$. Hence, $\neg\K\neg\K\phi$ by the contraposition of assumption (ii). Thus, by the contrapositive of the Negative Introspection axiom, we have $\K\phi$. Then, $\phi$ is true by assumption (iii). Following Stalnaker, we have just shown that if a mathematician believes that she knows that conjecture $\phi$ is true, then conjecture $\phi$ must be true. 
However, intuitively, this cannot be true because her belief alone should not make the conjecture true.

While the Negative Introspection axiom might not be true as the most general epistemic axiom, Fagin, Halpern,  Moses, and Vardi point out that it is ``appropriate for many multi-agent systems applications'', such as the analysis of the muddy children puzzle, knowledge bases, games of imperfect information, and message-passing systems~\cite[p.18]{fhmv95}. In spite of its limitations, S5 has been accepted as the standard epistemic logic in the modern literature~\cite{aa16jlc,abvs10jal,dhk07,mv04}. There also have been suggestions to consider modal systems in between S4 and S5
~\cite{k76}, 
\cite[p.82]{l78}.

The purpose of this paper is not to settle the discussion about the merits of S4 vs. S5, but to highlight the difference between these two systems from the viewpoint of an evidence-based semantics. The main technical result of our work is a sound and complete logical system with two modalities that capture individual properties of the attainable knowledge and of the omniscience as well as the properties that connect these two types of knowledge.

\subsection{Grand Hotel Example} 

Consider the famous Hilbert Grand Hotel that has infinitely many rooms. By a single observation in this setting we mean examining a room in order to establish whether it is empty. If the hotel has vacancies, then at least one room is empty. By opening just a single door of that specific room one can learn that the hotel has vacancy. We write $\boxdot(\mbox{``the hotel has vacancy"})$ to express this fact.
The modality $\boxdot$ denotes {\em attainable knowledge} that can be formed from finitely many observations.  In fact, in this case a single observation suffices. On the other hand, if the hotel has no vacancy, then the knowledge about this cannot be obtained from examining any finite subset of the rooms: $\neg\boxdot(\mbox{``the hotel has no vacancy"})$. At the same time, one can learn that the hotel is full by examining all rooms in the hotel. We write this as $\Box(\mbox{``the hotel has no vacancy"})$, where modality $\Box$ denotes the {\em omniscience} that can be formed from a potentially infinite body of evidence. 

Let us first show that attainable knowledge does not satisfy the Negative Introspection axiom:
\begin{equation}\label{attainable negative introspection}
    \neg\boxdot\phi\to\boxdot\neg\boxdot\phi.
\end{equation}
Consider the epistemic world $w$ in which the hotel is full and let $\phi$ be statement ``the hotel has vacancy". Thus, $w\nVdash\phi$. Hence, one cannot know that $\phi$ is true from examining (finitely many) hotel rooms: $w\nVdash\boxdot\phi$. Hence, $w\Vdash\neg\boxdot\phi$. At the same time, informally, the only reason why $w\Vdash\neg\boxdot\phi$ is true is because the hotel is full. Since the latter cannot be established through finitely many observations,  $w\Vdash\neg\boxdot\neg\boxdot\phi$. More formally, to prove $w\Vdash\neg\boxdot\neg\boxdot\phi$ we need to show that no matter which finite set of evidence is examined, there still will be an epistemic world $u$ indistinguishable from $w$ such that  $u\Vdash\boxdot\phi$. Indeed, suppose that we have chosen to examine rooms $r_1,\dots,r_n$. Let $r$ be any room different from rooms $r_1,\dots,r_n$. Let $u$ be an epistemic world in which all rooms except for room $r$ are occupied. Thus, epistemic worlds $w$ and $u$ are indistinguishable through the chosen finite set of examinations. Yet, $u\Vdash\boxdot\phi$ because in epistemic world $u$, it is enough to examine a single room $r$  to learn that the hotel has vacancy. This concludes the counterexample for the Negative Introspection axiom~(\ref{attainable negative introspection}). 

From the classical (non-evidence-based) Kripke semantics point of view, the Negative Introspection axiom captures the fact that the indistinguishability relation on epistemic states is symmetric. In our evidence-based semantics, the indistinguishability by a single evidence is also symmetric. That is, if one can distinguish one state of the hotel from another state by examining just a single room, then one can also distinguish the latter state from the former by examining the same room. The reason why the Negative Introspection axiom fails, under the evidence-based semantics, is because {\em the indistinguishability relation between a world and a set of worlds by a potentially unlimited, but finite body of evidence is not symmetric.} Indeed, consider the epistemic world in which the hotel is empty and the set of all worlds in which the hotel is not empty. By examining a right single room in a non-empty hotel, one can distinguish it from an empty hotel. Yet, after examining any finite subset of rooms in an empty hotel one cannot distinguish the current world from the set of worlds in which the hotel is not empty. 

Let us now consider a weaker form of the Negative Introspection axiom used in logic S4.4~\cite{k76}:
\begin{equation}\label{weak attainable negative introspection}
    \phi\to(\neg\boxdot\phi\to\boxdot\neg\boxdot\phi).
\end{equation}
The above counterexample for formula~(\ref{attainable negative introspection}) does not work as a counterexample for formula~(\ref{weak attainable negative introspection}) because in that setting $w\nVdash\phi$. Nevertheless, principle  (\ref{weak attainable negative introspection}) is not valid in general. Indeed, let us modify the hotel setting by assuming that some rooms in the hotel might have bedbugs. The presence of the bedbugs can be tested when a room is examined. Furthermore, let us assume that once a single room in the hotel becomes infected with bedbugs all guests immediately leave the hotel. Thus, the hotel might have either (a) visitors and no bedbugs, or (b) no visitors and no bedbugs, or (c) bedbugs and no visitors. Consider an epistemic world $w$ in which the hotel has no visitors and no bugs. Let $\phi$ be the statement ``the hotel has no visitors". Thus, $w\Vdash \phi$. Note that there are two ways to verify that the hotel has no visitors: either by examining all rooms to observe that they are all vacant or by examining a single room that contains bedbugs. Since in the epistemic world $w$ the hotel is not infected with bugs, the only way to verify that the hotel is empty in this world is to examine all rooms. Thus,  $w\Vdash\neg\boxdot\phi$. To finish the counterexample for formula~(\ref{weak attainable negative introspection}), it suffices to show that  $w\nVdash\boxdot\neg\boxdot\phi$. In other words, we need to prove that, after examining a finite set of rooms $r_1,\dots,r_n$, we will not be able to distinguish epistemic state $w$ from such an epistemic state $u$ that $u\Vdash\boxdot\phi$. Indeed, let $r$ be any room different from rooms $r_1,\dots,r_n$ and let $u$ be the epistemic world in which room $r$ is infected with bedbugs. Note that $u\Vdash\boxdot\phi$ because in epistemic state $u$ it is enough to examine the bug-infected room $r$ to conclude that the hotel is empty. 

\subsection{Formal Semantics of Evidence}  

To make our Grand Hotel example more formal, we introduce the formal semantics of evidence-based knowledge. Since we view evidence as a way to distinguish epistemic worlds, in this paper we interpret the pieces of evidence as equivalence relations on the worlds. When an agent takes into account several pieces of evidence, equivalence relations corresponding to these pieces intersect to form the equivalence relation of the agent. In the second Grand Hotel example above, each room is in one of the three states: occupied, vacant without bedbugs, and vacant with bedbugs. An epistemic world can be described by specifying the state of each room. The observation that consists of examining room $r$ can distinguish two epistemic worlds in which room $r$ is in different states.  It cannot distinguish two epistemic worlds in which room $r$ is in the same state. 

In other words, we assemble an agent's knowledge from several pieces of evidence in a similar way as how distributed knowledge~\cite{fhmv95} of a group is assembled from the individual observations of the members of the group. Thus, the logical system developed in this paper could also be used to describe two different forms of group knowledge by an infinite group of agents: modality $\Box$ represents the standard distributed knowledge by the whole group and modality $\boxdot$ represents the distributed knowledge by a finite subgroup of the whole group.

The formal evidence-based semantics described above is similar to the one for the budget-constrained knowledge proposed by 
us earlier~\cite{nt15aamas,nt17jal}. It is different from the neighbourhood semantics of evidence investigated by van Benthem and Pacuit~\cite{bp11sl} and the probabilistic approach of Halpern and Pucella~\cite{hp06jair}. It is not clear how and if the results in the current paper could be applied to these alternative semantics.

\subsection{Logical System}\label{logical system subsection}

In this paper we propose a sound and complete logical system that describes the universal properties of omniscience modality $\Box$ and attainable knowledge modality $\boxdot$. The axioms involving only modality $\Box$ are exactly those forming modal logic S5:
\begin{enumerate}
    \item  Truth: $\Box\phi\to\phi$,
    \item  Positive Introspection: $\Box\phi\to\Box\Box\phi$, 
    \item  Negative Introspection: $\neg\Box\phi\to\Box\neg\Box\phi$,
    \item  Distributivity:  $\Box(\phi\to\psi)\to(\Box\phi\to\Box\psi)$. 
\end{enumerate}
Later we do not list the Positive Introspection axiom among axioms of our system because, just like in the case of logic S5, the Positive Introspection axiom is derivable from the other axioms. We prove this in Lemma~\ref{potential positive introspection lemma}.
The axioms involving only modality $\boxdot$ are exactly those forming modal logic S4:
\begin{enumerate}
    \item  Truth: $\boxdot\phi\to\phi$,   
    \item  Positive Introspection: $\boxdot\phi\to\boxdot\boxdot\phi$,
    \item Distributivity: $\boxdot(\phi\to\psi)\to(\boxdot\phi\to\boxdot\psi)$.
\end{enumerate}
Finally, there appears to be two independent axioms that capture the interplay of the two modalities:
\begin{enumerate}
    \item Monotonicity: $\boxdot\phi\to\Box\phi$,
    \item Mixed Negative Introspection: $\neg\boxdot\phi\to\Box\neg\boxdot\phi$.
\end{enumerate}
The first of these axioms states that any knowledge that can be formed from a finite subset  of observations can also be formed on the bases of the whole set of observations. The second axiom is significantly more interesting. We have seen in our first Grand Hotel example that the Negative Introspection axiom is not true for attainable knowledge. Namely an agent in a fully occupied hotel cannot learn that statement $\boxdot(\mbox{``the hotel has vacancies"})$ is false by examining only finitely many rooms. It can learn this, however, by examining all room in the hotel and this is exactly what the Mixed Negative Introspection axiom claims. Surprisingly, the Mixed Negative Introspection axiom is provable from other axioms of our logical system. We show this in Lemma~\ref{mixed negative introspection lemma}. As a result, Mixed Negative Introspection is not included as an axiom of our system. Additionally, it is easy to see that the Truth axiom for modality $\boxdot$ follows from the Truth axiom for modality $\Box$ and the Monotonicity axiom. For this reason, we do not list the Truth axiom for modality $\boxdot$ as one of our axioms either. Finally, although one can state two forms of the Necessitation inference rule: one for  modality $\Box$ and another for modality $\boxdot$, the former follows from the latter and the Monotonicity axiom. Thus, our system, in addition to Modus Ponens, only includes the Necessitation rule for modality $\boxdot$.

\subsection{Outline}

The rest of the paper is organized as following. Section~\ref{syntax and semantics section} defines the syntax and an evidence-based semantics of our logical system. Section~\ref{axioms section} lists the axioms and  the inference rules of this system 
and proves
the Mixed Negative Introspection axiom mentioned above. 
In Section~\ref{soundness section}, we prove the soundness of our logical system with respect to the evidence-based semantics. In Section~\ref{canonical model section}, we state the completeness theorem and define the canonical model used in its proof. The full proof of the completeness can be found in the appendix. Section~\ref{conclusion section} concludes. 

\section{Syntax and Semantics}\label{syntax and semantics section}

In this section, we describe the formal syntax and the formal evidence-based semantics of our logical system. Throughout the rest of the paper we assume a fixed infinite set of propositional variables. 

\begin{definition}
The set $\Phi$ of all formulae $\phi$ is defined by grammar
$
\phi := p \;|\; \neg \phi \;|\; \phi\to\phi \;|\; \boxdot \phi \;|\; \Box \phi,
$
where $p$ represents propositional variables.
\end{definition}

\begin{definition}
An evidence model is $\<W,E,\{\sim_{e}\}_{e\in E},\pi\>$, where
\begin{enumerate}
    \item $W$ is a (possibly empty) set of ``epistemic worlds",
    \item $E$ is an arbitrary (possibly empty) ``evidence" set,
    \item $\sim_e$ is an ``indistinguishability" equivalence relation on $W$ for each $e\in E$,
    \item $\pi$ is a function that maps propositional variables into subsets of $W$.
\end{enumerate}
\end{definition}
For instance, in our first Grand Hotel example, an epistemic world is a function $\mathbb{N}\to \{vacant, occupied\}$ that assigns a state to each room in the hotel. The set of evidence is $\mathbb{N}$, where evidence with number $r\in \mathbb{N}$ corresponds to examining room with number $r$ in this hotel. Epistemic worlds $w_1$ and $w_2$ are $\sim_r$-equivalent if room number $r$ has the same state in both of the worlds. In other words, $w_1\sim_r w_2$ if $w_1(r)=w_2(r)$. Finally a function $\pi$ may, for example, map propositional variable $p$ into the set of all epistemic worlds representing a nonempty hotel: $\pi(p)=\{w\in W\;|\; \exists r\in \mathbb{N}\mbox{ such that } w(r)=occupied\}.$

In this paper, we write $w_1\sim_F w_2$ if $w_1\sim_e w_2$ for each $e\in F$. In particular, $w_1\sim_\varnothing w_2$ for any two epistemic worlds $w_1,w_2\in W$. 
\begin{definition}\label{sat}
For any formula $\phi\in\Phi$ and any epistemic world $w\in W$ of an evidence model $\<W,E,\{\sim_e\},\pi\>$, let the satisfaction relation $w\Vdash\phi$ be defined as follows,
\begin{enumerate}
    \item $w\Vdash p$, if $w\in \pi(p)$,
    \item $w\Vdash\neg\phi$, if $w\nVdash\phi$,
    \item $w\Vdash\phi\to\psi$, if $w\nVdash\phi$ or $w\Vdash\psi$,
    \item $w\Vdash\boxdot\phi$, if there is a finite $F\subseteq E$ such that  $w\sim_{F}u$ implies $u\Vdash\phi$ for each $u\in W$,
    \item $w\Vdash\Box\phi$, if $u\Vdash\phi$ for each $u\in W$ such that $w\sim_{E}u$.
\end{enumerate}
\end{definition}

\section{Axioms}\label{axioms section}

In addition to propositional tautologies in language $\Phi$, our system consists of the following axioms:

\begin{enumerate}
    \item  Truth: $\Box\phi\to\phi$,
        \item  Positive Introspection: $\boxdot\phi\to\boxdot\boxdot\phi$,
    \item  Negative Introspection: $\neg\Box\phi\to\Box\neg\Box\phi$,
    \item  Distributivity: 
    $\boxdot(\phi\to\psi)\to(\boxdot\phi\to\boxdot\psi)$ and
    $\Box(\phi\to\psi)\to(\Box\phi\to\Box\psi)$,
    \item Monotonicity: $\boxdot\phi\to\Box\phi$.

\end{enumerate}

We say that a formula $\phi$ is a theorem in our logical system and write $\vdash\phi$ if formula $\phi$ is derivable from the axioms of our systems using  the Modus Ponens and the Necessitation inference rules:
$$
\dfrac{\phi,\;\;\; \phi\to\psi}{\psi}
\hspace{15mm}
\dfrac{\phi}{\boxdot\phi}.
$$
We write $X\vdash\phi$ if formula $\phi$ is derivable from the theorems of our logical systems and an additional set of axioms $X$ using {\em only} the Modus Ponens inference rule. We say that set $X\subseteq\Phi$ is inconsistent if there is a formula $\phi\in\Phi$ such that $X\vdash\phi$ and $X\vdash\neg\phi$.
%

\begin{lemma}\label{jan29-a}
If $X$ is a finite consistent set of formulae, and $p$ is a propositional variable that does not occur in $X$, then sets $X\cup\{p\}$ and $X\cup\{\neg p\}$ are both consistent. 
\end{lemma}
\begin{proof}
Suppose that either set $X\cup\{p\}$ or set $X\cup\{\neg p\}$ is inconsistent. Thus, either $X\vdash \neg p$ or $X\vdash p$. 

First, we show that $X\vdash \neg p$ in both cases. Indeed, suppose that $X\vdash p$. Let $\phi_1,\dots,\phi_n$ be the proof of formula $p$ from the set of the assumption $X$. Let $\phi_i'$ be the result of the replacement of propositional variable $p$ by formula $\neg p$ in formula $\phi_i$. Note that variable $p$ does not occur in set $X$ by the assumption of the lemma. Then, $\phi'_1,\dots,\phi'_n$ is the proof of formula $\neg p$ from the set of the assumption $X$. Therefore, $X\vdash \neg p$. 

Next, we show that $X\vdash \neg\neg p$. 
Let $\psi_1,\dots,\psi_m$ be the proof of formula $\neg p$ from the set of the assumption $X$. Consider now sequence $\psi'_1,\dots,\psi'_m$ obtained by replacing each occurrence of variable $p$ with $\neg p$ in proof $\psi_1,\dots,\psi_m$. Note again that variable $p$ does not occur in set $X$ by the assumption of the lemma. Then, sequence $\psi'_1,\dots,\psi'_m$ is a proof of formula $\neg\neg p$ from the set of the assumption $X$. Therefore, $X\vdash \neg\neg p$. 

Statements $X\vdash \neg p$ and $X\vdash \neg\neg p$ imply that set $X$ is not consistent.
\end{proof}

\begin{lemma}\label{potential positive introspection lemma}
$\vdash \Box\phi\to\Box\Box\phi$.
\end{lemma}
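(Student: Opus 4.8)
The plan is to replay the classical S5 derivation of the $4$ axiom (Positive Introspection) from Truth, Negative Introspection, and Distributivity, but to route every appeal to the $\Box$-Necessitation rule---which our system deliberately omits---through Attainable Necessitation followed by the Monotonicity axiom. Since the statement involves only $\Box$, the evidence semantics plays no direct role here; this is a purely proof-theoretic exercise inside the $\Box$-fragment, with one bookkeeping twist.

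First I would establish a derived $\Box$-Necessitation rule: from $\vdash\chi$, Attainable Necessitation yields $\vdash\boxdot\chi$, and then the Monotonicity axiom $\boxdot\chi\to\Box\chi$ together with Modus Ponens gives $\vdash\Box\chi$. This single move supplies the only ingredient our axiomatization is missing; once it is available, the rest is the textbook argument. Using derived $\Box$-Necessitation together with Distributivity I would next obtain the \emph{monotonicity rule} for $\Box$: whenever $\vdash A\to B$, one has $\vdash\Box A\to\Box B$ (necessitate the implication, then distribute), and dually $\vdash\neg\Box\neg A\to\neg\Box\neg B$ by contraposition. From the two directions I also get $\Box$-congruence, which lets me normalize double negations occurring under a $\Box$.

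With these tools I would carry out the core chain. Taking the instance of Negative Introspection at $\phi$, namely $\neg\Box\phi\to\Box\neg\Box\phi$, and applying the dual monotonicity rule propagates it under the modal prefix; simplifying the resulting double negations by $\Box$-congruence and then collapsing the leading $\neg\Box\neg\Box(\cdot)$ prefix via the contrapositive form of Negative Introspection, $\neg\Box\neg\Box\psi\to\Box\psi$, produces $\neg\Box\Box\phi\to\Box\neg\Box\phi$. Chaining this with the Truth instance $\Box\neg\Box\phi\to\neg\Box\phi$ gives $\neg\Box\Box\phi\to\neg\Box\phi$, whose propositional contrapositive is exactly $\Box\phi\to\Box\Box\phi$.

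The only genuine obstacle is the absence of a primitive $\Box$-Necessitation rule; the derived rule of the second paragraph dissolves it, and every remaining step is either propositional reasoning, an instance of an axiom, or an application of the monotonicity/congruence rules already justified. I would therefore take care that each implicit use of Necessitation in the standard S5 proof is explicitly replaced by the Attainable Necessitation plus Monotonicity detour, and that the contrapositive reformulations of Negative Introspection and Truth are obtained by purely propositional manipulation of the stated axioms.
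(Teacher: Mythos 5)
Your proposal is correct and follows essentially the same route as the paper: the key move in both is the derived $\Box$-Necessitation rule obtained from Attainable Necessitation plus Monotonicity, after which the standard S5 derivation of Positive Introspection from Truth, Negative Introspection, and Distributivity goes through. Your chain is just the contrapositive (dual) phrasing of the paper's chain, using the same axiom instances at $\phi$ and at $\neg\Box\phi$.
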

\begin{proof}
Note that formula $\Box\neg\Box\phi\to\neg\Box\phi$ is an instance of the Truth axiom. Thus, $\vdash \Box\phi\to\neg\Box\neg\Box\phi$ by the law of contrapositive in the propositional logic. Hence, taking into account the following instance of  the Negative Introspection axiom $\neg\Box\neg\Box\phi\to\Box\neg\Box\neg\Box\phi$,
one can conclude that 
\begin{equation}\label{pos intro eq}
\vdash \Box\phi\to\Box\neg\Box\neg\Box\phi.
\end{equation}

At the same time,  $\neg\Box\phi\to\Box\neg\Box\phi$ is an instance of the Negative Introspection axiom. Thus, $\vdash \neg\Box\neg\Box\phi\to \Box\phi$ by contraposition. Hence, by the  Necessitation inference rule,
$\vdash \boxdot(\neg\Box\neg\Box\phi\to \Box\phi)$. Thus, $\vdash \Box(\neg\Box\neg\Box\phi\to \Box\phi)$ by the Monotonicity axiom and the Modus Ponens inference rule. Thus, by  the Distributivity axiom and the Modus Ponens inference rule, 
$\vdash \Box\neg\Box\neg\Box\phi\to \Box\Box\phi$.
The last statement, together with statement~(\ref{pos intro eq}), implies the statement of the lemma by the laws of propositional reasoning.
\end{proof}

We conclude this section with the proof of the Mixed Negative Introspection axiom mentioned in Section~\ref{logical system subsection} of the introduction.

\begin{lemma}\label{mixed negative introspection lemma}
$\vdash\neg\boxdot\phi\to\Box\neg\boxdot\phi$.
\end{lemma}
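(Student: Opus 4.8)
The plan is to reduce Mixed Negative Introspection to ordinary Negative Introspection for $\Box$ by first establishing that the attainable modality ``collapses'' under $\Box$, namely that $\vdash\boxdot\phi\to\Box\boxdot\phi$. To get this, I would combine Attainable Positive Introspection $\boxdot\phi\to\boxdot\boxdot\phi$ with the instance $\boxdot\boxdot\phi\to\Box\boxdot\phi$ of Monotonicity (taking $\boxdot\phi$ as the inner formula). Propositional reasoning then gives $\vdash\boxdot\phi\to\Box\boxdot\phi$, whose contrapositive is $\vdash\neg\Box\boxdot\phi\to\neg\boxdot\phi$. This contrapositive is the implication I will want to push under a box later, so I would keep it on hand.

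Next I would apply Negative Introspection to the formula $\boxdot\phi$, i.e.\ use the instance $\neg\Box\boxdot\phi\to\Box\neg\Box\boxdot\phi$. To drive this from the hypothesis $\neg\boxdot\phi$, I note that the contrapositive of the Truth instance $\Box\boxdot\phi\to\boxdot\phi$ is $\vdash\neg\boxdot\phi\to\neg\Box\boxdot\phi$. Chaining these two implications yields $\vdash\neg\boxdot\phi\to\Box\neg\Box\boxdot\phi$. At this point the only remaining gap is to rewrite the consequent $\Box\neg\Box\boxdot\phi$ as the desired $\Box\neg\boxdot\phi$.

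I expect closing that gap to be the main obstacle, because the system has no Necessitation rule for $\Box$, so I cannot simply place the already-derived implication $\neg\Box\boxdot\phi\to\neg\boxdot\phi$ inside a $\Box$. The trick is to route it through the attainable modality instead: apply Attainable Necessitation to obtain $\vdash\boxdot(\neg\Box\boxdot\phi\to\neg\boxdot\phi)$, then Monotonicity to lift this to $\vdash\Box(\neg\Box\boxdot\phi\to\neg\boxdot\phi)$, and finally Distributivity together with Modus Ponens to conclude $\vdash\Box\neg\Box\boxdot\phi\to\Box\neg\boxdot\phi$. Composing this with $\vdash\neg\boxdot\phi\to\Box\neg\Box\boxdot\phi$ from the previous paragraph gives $\vdash\neg\boxdot\phi\to\Box\neg\boxdot\phi$, which is the statement of the lemma. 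The essential idea throughout is that $\Box$ genuinely possesses Negative Introspection, and the collapse $\boxdot\phi\leftrightarrow\Box\boxdot\phi$ lets the $\boxdot$ modality inherit it.
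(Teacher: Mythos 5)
Your proposal is correct and follows essentially the same route as the paper's own proof: the contrapositive of Truth applied to $\Box\boxdot\phi$, Negative Introspection on $\boxdot\phi$, the collapse $\boxdot\phi\to\Box\boxdot\phi$ via Attainable Positive Introspection and Monotonicity, and boxing the resulting implication through Attainable Necessitation, Monotonicity, and Distributivity. The only difference is the order in which the two halves are assembled, which is immaterial.
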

\begin{proof}
By the  Truth axiom, $\vdash \Box\boxdot\phi\to\boxdot\phi$. Thus, by the law of contrapositive in the propositional logic, $\vdash \neg\boxdot\phi\to\neg\Box\boxdot\phi$. At the same time, by  the Negative Introspection axiom, $\vdash \neg\Box\boxdot\phi\to\Box\neg\Box\boxdot\phi$. Hence, by the laws of propositional reasoning,
\begin{equation}\label{interplay eq}
    \vdash \neg\boxdot\phi\to\Box\neg\Box\boxdot\phi.
\end{equation}

Note that $\vdash \boxdot\phi\to\boxdot\boxdot\phi$ by the  Positive Introspection axiom and $\vdash \boxdot\boxdot\phi\to \Box\boxdot\phi$ by the Monotonicity axiom. Thus, by the laws of propositional reasoning,
$\vdash\boxdot\phi\to\Box\boxdot\phi$. Hence, by the law of contrapositive in the propositional logic, $\vdash \neg\Box\boxdot\phi\to\neg\boxdot\phi$. Then, $\vdash \boxdot(\neg\Box\boxdot\phi\to\neg\boxdot\phi)$ by the  Necessitation inference rule. Thus, $\vdash \Box(\neg\Box\boxdot\phi\to\neg\boxdot\phi)$ by the Monotonicity axiom and the Modus Ponens inference rule. Hence, $\vdash \Box\neg\Box\boxdot\phi\to\Box\neg\boxdot\phi$ by  the Distributivity axiom and the Modus Ponens inference rule. Therefore, $\vdash\neg\boxdot\phi\to\Box\neg\boxdot\phi$, by the laws of propositional reasoning taking  into account statement~(\ref{interplay eq}).
\end{proof}


\section{Soundness}\label{soundness section}



\begin{theorem}[strong soundness]\label{strong soundness}
For any world $w\in W$ of each evidence model $\<W,E,\{\sim_e\}_{e\in E},\pi\>$, any set of formulae $X\subseteq \Phi$, and any formula $\phi\in\Phi$, if $w\Vdash\chi$ for each formula $\chi\in X$ and $X\vdash\phi$, then $w\Vdash \phi$.
\end{theorem}

The soundness of propositional tautologies and Modus Ponens inference rule is straightforward. Below we prove the soundness of each of the remaining axioms and the  Necessitation inference rule as separate lemmas.

\begin{lemma}
If $w\Vdash\Box\phi$, then $w\Vdash\phi$.
\end{lemma}
\begin{proof}
By Definition~\ref{sat},
assumption $w\Vdash\Box\phi$ implies that $u\Vdash\phi$ for all $u\in W$ such that $w\sim_E u$. Note that  $w\sim_e w$ for all $e\in E$ because $\sim_e$ is an equivalence relation. Hence, $w\sim_E w$. Therefore, $w\Vdash\phi$.
\end{proof}

\begin{lemma}
If $w\Vdash\boxdot\phi$, then $w\Vdash\boxdot\boxdot\phi$.
\end{lemma}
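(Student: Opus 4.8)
The plan is to unfold the semantics of $\boxdot$ twice and exhibit an explicit finite set of evidence witnessing the outer modality. Recall from Definition~\ref{sat} that $w\Vdash\boxdot\psi$ means there is a finite $F\subseteq E$ such that $w\sim_F u$ implies $u\Vdash\psi$ for every $u\in W$. So the hypothesis $w\Vdash\boxdot\phi$ gives me a finite set $F\subseteq E$ with the property that $w\sim_F u$ implies $u\Vdash\phi$ for all $u$. My goal is to produce a finite set $G\subseteq E$ such that $w\sim_G v$ implies $v\Vdash\boxdot\phi$ for all $v$, which would establish $w\Vdash\boxdot\boxdot\phi$.

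First I would claim that the very same set $F$ works for the outer modality, i.e.\ take $G=F$. To verify this, fix an arbitrary $v\in W$ with $w\sim_F v$; I must show $v\Vdash\boxdot\phi$, meaning I must produce a finite set of evidence witnessing that $\phi$ holds throughout the $\sim$-class of $v$. The natural candidate is again $F$ itself: I would show that $v\sim_F u$ implies $u\Vdash\phi$ for every $u\in W$.

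The key step is the transitivity-and-symmetry argument on the relation $\sim_F$. Since each $\sim_e$ is an equivalence relation and $\sim_F$ is the intersection $\bigcap_{e\in F}\sim_e$ of equivalence relations, $\sim_F$ is itself an equivalence relation, hence in particular symmetric and transitive. So from $w\sim_F v$ I get $v\sim_F w$ by symmetry, and then from $v\sim_F u$ I get $w\sim_F u$ by transitivity. Applying the hypothesis to $w\sim_F u$ yields $u\Vdash\phi$, exactly as needed. This shows $v\Vdash\boxdot\phi$ via the witness $F$, and since $v$ was an arbitrary $\sim_F$-neighbour of $w$, it shows $w\Vdash\boxdot\boxdot\phi$ via the same witness $F$.

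I do not expect a genuine obstacle here; the argument is essentially the finitary analogue of the standard S4 positive-introspection soundness proof, and the only point requiring a moment's care is confirming that $\sim_F$ inherits equivalence-relation status from the component relations $\sim_e$ so that symmetry and transitivity are available. The crucial observation is that the single set $F$ serves simultaneously as the inner and outer witness, so finiteness of the evidence set is preserved automatically and no enlargement of $F$ is required.
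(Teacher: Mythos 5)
Your proof is correct and follows essentially the same route as the paper's: reuse the single finite witness $F$ for both the outer and inner occurrences of $\boxdot$, then chain $w\sim_F v\sim_F u$ into $w\sim_F u$ using the fact that each $\sim_e$ (hence $\sim_F$) is an equivalence relation. The only cosmetic difference is that you invoke symmetry before transitivity where the paper applies transitivity directly; the substance is identical.
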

\begin{proof}
By Definition~\ref{sat}, the assumption $w\Vdash\boxdot\phi$ implies that there is a finite set $F\subseteq E$ such that $u\Vdash\phi$ for each $u\in W$ where $w\sim_F u$.

Again by Definition~\ref{sat}, it suffices to show that $v\Vdash\boxdot\phi$ for all $v\in W$ such that $w\sim_F v$. To establish this, it is enough to prove that $u\Vdash\phi$ for all $u\in W$ such that $v\sim_F u$. Note that $w\sim_F v\sim_F u$. Thus, $w\sim_F u$ because $\sim_f$ is an equivalence relation for each element $f\in F$. Then, $u\Vdash\phi$ by the choice of set $F$.
\end{proof}

\begin{lemma}
If $w\Vdash\neg\Box\phi$, then $w\Vdash\Box\neg\Box\phi$.
\end{lemma}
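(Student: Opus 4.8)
The plan is to reduce the claim to the classical soundness argument for Negative Introspection over an equivalence accessibility relation. The only model-specific observation needed is that $\sim_E$ is itself an equivalence relation: since $w_1\sim_E w_2$ holds exactly when $w_1\sim_e w_2$ for every $e\in E$, and each $\sim_e$ is reflexive, symmetric and transitive, the relation $\sim_E$ inherits all three properties as an intersection of equivalence relations. Once this is in hand, the rest is a routine Kripke-semantics unfolding via Definition~\ref{sat}.

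First I would unfold the hypothesis $w\Vdash\neg\Box\phi$, which by Definition~\ref{sat} means $w\nVdash\Box\phi$. Again by the clause for $\Box$, this gives a witnessing world $u\in W$ with $w\sim_E u$ and $u\nVdash\phi$. Next, to establish $w\Vdash\Box\neg\Box\phi$, I would fix an arbitrary $v\in W$ with $w\sim_E v$ and aim to show $v\Vdash\neg\Box\phi$, i.e. $v\nVdash\Box\phi$.

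The key step is to exhibit, for this $v$, a world reachable from $v$ that refutes $\phi$, and the natural candidate is the same $u$ produced above. Using symmetry of $\sim_E$ on $w\sim_E v$ together with $w\sim_E u$ and transitivity, I obtain $v\sim_E u$; since $u\nVdash\phi$, Definition~\ref{sat} yields $v\nVdash\Box\phi$, hence $v\Vdash\neg\Box\phi$. As $v$ was an arbitrary world with $w\sim_E v$, the clause for $\Box$ gives $w\Vdash\Box\neg\Box\phi$.

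I do not expect any genuine obstacle here: the argument is the standard Euclidean-relation proof, and the only point requiring care is justifying that $\sim_E$ is symmetric and transitive, which follows directly from the defining convention $w_1\sim_E w_2 \iff \forall e\in E\,(w_1\sim_e w_2)$ and the assumption that each $\sim_e$ is an equivalence relation.
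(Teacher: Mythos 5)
Your proof is correct and matches the paper's argument essentially step for step: both take the witness $u$ with $w\sim_E u$ and $u\nVdash\phi$, fix an arbitrary $v$ with $w\sim_E v$, and use the fact that $\sim_E$ (as an intersection of equivalence relations) is symmetric and transitive to get $v\sim_E u$ and hence $v\nVdash\Box\phi$. No differences worth noting.
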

\begin{proof}
By Definition~\ref{sat},
assumption  $w\Vdash\neg\Box\phi$ implies that there is $u\in W$ such that $w\sim_E u$ and $u\nVdash \phi$.

Consider any $v\in W$ such that $w\sim_E v$. By Definition~\ref{sat}, to prove $w\Vdash\Box\neg\Box\phi$,  it suffices to show that $v\Vdash \neg\Box\phi$. Note that $w\sim_E u$ and $w\sim_E v$. Thus, $v\sim_E u$ due to $\sim_e$ being an equivalence relation for each $e\in E$. Recall that $u\nVdash \phi$. Hence, $v\nVdash\Box\phi$ by Definition~\ref{sat}. Therefore, $v\Vdash \neg\Box\phi$ again by Definition~\ref{sat}.
\end{proof}

\begin{lemma}
If $w\Vdash\boxdot(\phi\to\psi)$ and $w\Vdash\boxdot\phi$, then $w\Vdash\boxdot\psi$.
\end{lemma}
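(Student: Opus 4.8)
The plan is to imitate the structure of the two preceding soundness lemmas: unpack both hypotheses through Definition~\ref{sat}, combine the two finite witness sets into a single finite set, and then read off the conclusion from the semantics of implication. First I would apply Definition~\ref{sat} to the assumption $w\Vdash\boxdot(\phi\to\psi)$ to obtain a finite set $F_1\subseteq E$ such that $u\Vdash\phi\to\psi$ for every $u\in W$ with $w\sim_{F_1}u$. Applying the same definition to $w\Vdash\boxdot\phi$ yields a finite set $F_2\subseteq E$ such that $u\Vdash\phi$ for every $u\in W$ with $w\sim_{F_2}u$.

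Next I would set $F=F_1\cup F_2$, which is finite because it is the union of two finite sets. To establish $w\Vdash\boxdot\psi$ via Definition~\ref{sat}, it suffices to show that $u\Vdash\psi$ for each $u\in W$ with $w\sim_F u$. Fix such a $u$. The key observation is that the relation $\sim_F$ refines both $\sim_{F_1}$ and $\sim_{F_2}$: since $w\sim_F u$ means $w\sim_e u$ for every $e\in F$, and since $F_1\subseteq F$ and $F_2\subseteq F$, we immediately obtain both $w\sim_{F_1}u$ and $w\sim_{F_2}u$. Hence $u\Vdash\phi\to\psi$ by the choice of $F_1$ and $u\Vdash\phi$ by the choice of $F_2$. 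By the clause for implication in Definition~\ref{sat}, these two facts together give $u\Vdash\psi$, which completes the argument.

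I do not expect any genuine obstacle here. The only step requiring a moment of care is noticing that one must take the union $F_1\cup F_2$ rather than either witness set alone, together with the entirely routine facts that this union is still finite and that enlarging the index set only makes $\sim_{(\cdot)}$ finer. Everything else is a direct unfolding of the definitions, exactly as in the Attainable Positive Introspection lemma above.
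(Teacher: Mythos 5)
Your proof is correct and follows essentially the same route as the paper's: unpack both hypotheses to get finite witness sets $F_1$ and $F_2$, take $F=F_1\cup F_2$, observe that $w\sim_F u$ implies $w\sim_{F_1}u$ and $w\sim_{F_2}u$, and conclude via the implication clause of Definition~\ref{sat}. No issues.
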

\begin{proof}
By Definition~\ref{sat}, the assumption $w\Vdash\boxdot(\phi\to\psi)$ implies that there is a finite set $F_1\subseteq E$ such that $u\Vdash\phi\to\psi$ for each $u\in W$ such that $w\sim_{F_1} u$. Similarly, the assumption $w\Vdash\boxdot\phi$ implies that there is a finite set $F_2\subseteq E$ such that $u\Vdash\phi$ for each $u\in W$ such that $w\sim_{F_2} u$.

Let $F=F_1\cup F_2$. It suffices to show that $u\Vdash\psi$ for each $u\in W$ such that $w\sim_{F} u$. Indeed, statement $w\sim_{F} u$ implies that $w\sim_{F_1} u$ and $w\sim_{F_2} u$. Hence, $u\Vdash\phi\to\psi$ and $u\Vdash\phi$ due to the choice of sets $F_1$ and $F_2$. Therefore, $u\Vdash\psi$ by Definition~\ref{sat}.
\end{proof}

\begin{lemma}
If $w\Vdash\Box(\phi\to\psi)$ and $w\Vdash\Box\phi$, then $w\Vdash\Box\psi$.
\end{lemma}
\begin{proof}
Consider any $u\in W$ such that $w\sim_E u$. By Definition~\ref{sat}, it suffices to show that $u\Vdash\psi$. Indeed, by Definition~\ref{sat}, the assumptions $w\Vdash\Box(\phi\to\psi)$ and $w\Vdash\Box\phi$ imply that $u\Vdash \phi\to\psi$ and $u\Vdash \phi$. Therefore, $u\Vdash \psi$, again by Definition~\ref{sat}.
\end{proof}

\begin{lemma}
If $w\Vdash\boxdot\phi$, then $w\Vdash\Box\phi$.
\end{lemma}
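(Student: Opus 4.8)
The plan is to prove the Monotonicity soundness lemma, namely that $w\Vdash\boxdot\phi$ implies $w\Vdash\Box\phi$, by unfolding the two semantic definitions and using the fact that the finite intersection relation $\sim_F$ is coarser than the full intersection relation $\sim_E$. First I would invoke Definition~\ref{sat} on the hypothesis $w\Vdash\boxdot\phi$: this yields a finite subset $F\subseteq E$ such that $u\Vdash\phi$ for every $u\in W$ with $w\sim_F u$. Next, to establish the conclusion $w\Vdash\Box\phi$, I would again appeal to Definition~\ref{sat}, which requires showing that $u\Vdash\phi$ for every $u\in W$ satisfying $w\sim_E u$.

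The crux of the argument is the observation that $w\sim_E u$ implies $w\sim_F u$. This follows directly from the definition of the notation $w_1\sim_F w_2$, which abbreviates ``$w_1\sim_e w_2$ for each $e\in F$.'' Since $F\subseteq E$, any world $u$ that is $\sim_e$-equivalent to $w$ for \emph{all} $e\in E$ is in particular $\sim_e$-equivalent to $w$ for all $e$ in the smaller set $F$. Thus fixing an arbitrary $u$ with $w\sim_E u$, I would deduce $w\sim_F u$, and then conclude $u\Vdash\phi$ by the choice of $F$ from the first step. Since $u$ was arbitrary among the worlds $\sim_E$-related to $w$, this gives exactly $w\Vdash\Box\phi$.

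I do not anticipate any genuine obstacle here: the entire content of the lemma is the monotonicity of the indexed-intersection relation under enlarging the index set, which is immediate from the definitions. The one point meriting a word of care is the quantifier direction—the $\boxdot$ modality asserts the \emph{existence} of a finite $F$ that suffices, whereas the $\Box$ modality quantifies universally over all $\sim_E$-related worlds—so I would make sure to first extract the witnessing $F$ and only afterward introduce the arbitrary $\sim_E$-related world, rather than the reverse. No additional lemmas beyond Definition~\ref{sat} are needed.
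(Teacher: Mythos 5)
Your proof is correct and follows essentially the same route as the paper's: extract the finite witness $F\subseteq E$ from $w\Vdash\boxdot\phi$, take an arbitrary $u$ with $w\sim_E u$, observe that $w\sim_E u$ implies $w\sim_F u$ since $F\subseteq E$, and conclude $u\Vdash\phi$. No issues.
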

\begin{proof}
Consider any $u\in W$ such that $w\sim_E u$. By Definition~\ref{sat}, it suffices to prove that $u\Vdash\phi$. By the same definition, the assumption $w\Vdash\boxdot\phi$ implies that there is a finite set $F\subseteq E$ such that $v\Vdash\phi$ for each $v\in W$ such that $w\sim_F v$. Note that statement $w\sim_E u$ implies that $w\sim_F u$. Therefore, $u\Vdash\phi$.
\end{proof}

\begin{lemma}
If $w\Vdash\phi$ for each epistemic world $w\in W$ of each evidence model $\<W,E,\{\sim_e\}_{e\in E},\pi\>$, then $w\Vdash\boxdot\phi$ for each epistemic world $w\in W$ of each evidence model $\<W,E,\{\sim_e\}_{e\in E},\pi\>$. 
\end{lemma}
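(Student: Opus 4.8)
The plan is to fix an arbitrary Kripke model with evidence $\<W,E,\{\sim_e\}_{e\in E},\pi\>$ together with an arbitrary epistemic world $w\in W$, and then to establish $w\Vdash\boxdot\phi$ directly from item~4 of Definition~\ref{sat}. That clause requires me to exhibit some finite set $F\subseteq E$ for which $w\sim_F u$ implies $u\Vdash\phi$ for every $u\in W$. So the whole proof reduces to producing such a witness $F$.

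The key observation is that the hypothesis is a \emph{validity} assumption: $\phi$ is true at every world of every model, and in particular at every world of the very model I just fixed. First I would take $F=\emptyset$, which is trivially a finite subset of $E$. For this choice the premise $w\sim_\emptyset u$ of the required implication holds vacuously for every $u\in W$, since $w\sim_F u$ abbreviates ``$w\sim_e u$ for each $e\in F$'', a condition that is empty when $F$ is empty. Thus it remains only to check that $u\Vdash\phi$ for every $u\in W$, and this is exactly what the validity assumption supplies, applied to the ambient model. Hence the witnessing finite set demanded by Definition~\ref{sat} exists, and $w\Vdash\boxdot\phi$ follows.

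There is essentially no technical obstacle here; the argument is immediate once one notices that validity of $\phi$ quantifies over all worlds of all models and therefore applies to the fixed model, while the choice $F=\emptyset$ collapses the indistinguishability condition entirely. The only subtlety worth attention is the quantifier structure: the conclusion must again be shown for every model and every world, so the proof should open by fixing these and then invoke the hypothesis for that \emph{same} model. Unlike the preceding soundness lemmas, no reflexivity, symmetry, or transitivity of the relations $\sim_e$ is needed.
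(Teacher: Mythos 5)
Your proposal is correct and follows exactly the paper's own argument: fix an arbitrary model and world, choose the witness $F=\varnothing$, note that $w\sim_\varnothing u$ holds vacuously for every $u\in W$, and conclude $u\Vdash\phi$ from the validity hypothesis applied to the same model. Nothing is missing.
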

\begin{proof}
Consider any epistemic world $w\in W$ of an arbitrary evidence model $\<W,E,\{\sim_e\}_{e\in E},\pi\>$. By Definition~\ref{sat}, it suffices to show that there is $F\subseteq E$ such that $u\Vdash\phi$ for each $u\in W$ where $w\sim_F u$. Indeed, let $F=\varnothing$. Note that $u\Vdash\phi$ for each $u\in W$ due to the assumption of the lemma.
\end{proof}

This concludes the proof of the soundness of our logical system.

\section{Canonical Model}\label{canonical model section}

\begin{theorem}[strong completeness]\label{completeness theorem}
For any set of formulae $X\subseteq \Phi$ and any formula $\phi\in\Phi$, if $X\nvdash \phi$, then there is an evidence model $\<W,E,\{\sim_e\}_{e\in E},\pi\>$ and an epistemic world $w\in W$ such that $w\Vdash\chi$ for each $\chi\in X$ and  $w\nVdash\phi$.
\end{theorem}
As usual, the proof of a completeness theorem is using a canonical model construction. We define the canonical model in this section and give the full proof of the completeness in the appendix.


The standard proof of completeness for modal logics S4 and S5 defines worlds of a canonical model as maximal consistent sets of formulae. In our case, to specify a canonical model we need to define not only the set of worlds, but also the evidence set. In the construction that we propose, these two sets are defined to be the same set $W_\infty$, which is a set of nodes in a certain infinite forest. As a result, the proof of the completeness is significantly more involved than the completeness proofs for logics S4 and S5.

Throughout the section we use two operations on sequences. If $w$ is a sequence $(x_1,x_2,\dots,x_n)$ and $u$ is a sequence $(y_1,y_2\dots,y_m)$, then by concatenation $w\!::\!u$ of these two sequences we mean sequence $(x_1,x_2,\dots,x_n,y_1,y_2\dots,y_m)$. By head $hd(w)$ of a  nonempty sequence $w=(x_1,x_2,\dots,x_n)$ we mean element $x_n$. For example, $(a,b)\!::\!(c)=(a,b,c)$ and $hd(a,b,c)=c$.

We now define the ``canonical" evidence model $\<W_\infty,W_\infty,\{\sim_e\}_{e\in W_\infty},\pi\>$. The set of epistemic worlds in the canonical model is identical to the evidence set. One can think that all pieces of evidence related in some sense to a given epistemic world are combined together into a single evidence associated with this world. Because of this, the evidence associated with world $w$ is simply referred to as evidence $w$.

Informally, set $W_\infty$ consists of sequences of the form $(X_0,\epsilon_1,X_1,\dots,\epsilon_n,X_n)$ where $X_0,\dots,X_n$ are maximal consistent sets of formulae and each of $\epsilon_1,\dots,\epsilon_n$ is either a finite subset of $W_\infty$ or symbol $\ast$. In what follows, the case when $\epsilon $ is a finite subset of $W_\infty$ will form an $\boxdot$-accessibility relation and the case $\epsilon=\ast$ will form both $\boxdot$-accessibility and $\Box$-accessibility relations. Such sequences can be visualised, see Figure~\ref{forest figure}, as paths in an infinite collection of infinite trees whose vertices are maximal consistent sets of formulae and whose edges are labeled with the $\epsilon$'s described above.

\begin{figure}[ht]
\begin{center}
\vspace{0mm}
\scalebox{0.7}{\includegraphics{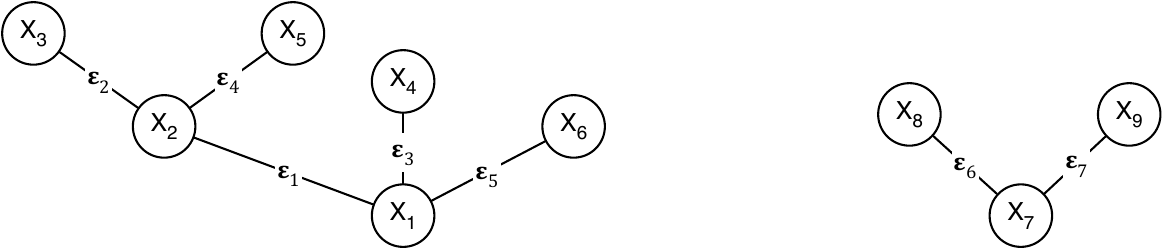}}
\vspace{0mm}
\caption{Fragments of the trees of sequences.}\label{forest figure}
\vspace{0cm}
\end{center}
\vspace{-2mm}
\end{figure}

Formally, set $W_\infty$ is specified as the union of a recursively defined infinite sequence of sets $W_0,W_1,\dots$. These sets represent different stages of building the infinite forest of infinite trees partially depicted in  Figure~\ref{forest figure}. Note, however, that stages do not correspond to the levels of the trees. Generally speaking, vertices at the same level are not created at the same stage. After a new epistemic world is added, this world can be used as a part of evidence later in the construction.

\begin{definition}\label{worlds and evidence definition}
A sequence of sets $W_0,W_1,\dots$ is defined recursively as follows.
\begin{enumerate}
    \item $W_0$ is the set of all single-element sequences $(X_0)$, where $X_0$ is a maximal consistent subset of $\Phi$,
    \item $W_{n+1}$ contains all sequences of the form $w\!::\!(\epsilon,Y)$  such that
        \begin{enumerate}
        \item $w\in \bigcup_{i\le n}W_i$,
    \item $\epsilon$ is either symbol $\ast$ or a finite subset of $\bigcup_{i\le n} W_i$,
    \item $Y$ is a maximal consistent subset of $\Phi$,
    \item if $\epsilon=\ast$, then $\{\phi\;|\; \Box\phi\in hd(w)\}\subseteq Y$, 
    \item if $\epsilon\subseteq \bigcup_{i\le n} W_i$, then $\{\phi\;|\; \boxdot\phi\in hd(w)\}\subseteq Y$. 
\end{enumerate}
\end{enumerate}
\end{definition}

\begin{definition}\label{infty definition}
$W_\infty=\bigcup_iW_i$.
\end{definition}

\begin{lemma}\label{infinity is infinite}
Set $W_\infty$ is infinite.
\end{lemma}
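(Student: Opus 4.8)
The plan is to show that $W_\infty$ contains infinitely many distinct elements. The cleanest route is to exhibit an explicit infinite family of sequences in $W_\infty$, or more simply to argue that the construction never terminates and keeps producing new sequences at each stage. First I would recall that by assumption $\nvdash\phi_0$, so the set $\{\neg\phi_0\}$ is consistent and can be extended (by the standard Lindenbaum construction) to a maximal consistent set $X_0\subseteq\Phi$. Hence the single-element sequence $(X_0)$ lies in $W_0$, which shows $W_\infty$ is nonempty to begin with.

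The main idea is then to build an infinite ascending chain of sequences, each a proper extension of the previous one, so that they are pairwise distinct. Starting from $w_0=(X_0)\in W_0$, I would show that for every $w\in W_n$ there exists at least one admissible extension $w\!::\!(\epsilon,Y)\in W_{n+1}$. Taking $\epsilon=\ast$, the only requirement imposed by Definition~\ref{worlds and evidence definition} is that the set $\{\phi\;|\;\Box\phi\in hd(w)\}$ be contained in the maximal consistent set $Y$. So the crux is to verify that this set of formulae is itself consistent; once that is established, Lindenbaum's lemma furnishes a maximal consistent $Y$ extending it, and then $w\!::\!(\ast,Y)$ is a legitimate member of $W_{n+1}$ that is strictly longer than $w$.

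The step I expect to be the main obstacle is precisely this consistency check: I must confirm that $\{\phi\;|\;\Box\phi\in hd(w)\}$ is consistent for an arbitrary head $hd(w)$ that is a maximal consistent set. This follows from Truth axiom ($\Box\phi\to\phi$), which guarantees that whenever $\Box\phi\in hd(w)$ we also have $\phi\in hd(w)$; therefore $\{\phi\;|\;\Box\phi\in hd(w)\}$ is a subset of the maximal consistent set $hd(w)$ and is thus consistent. (Alternatively, one can invoke Necessitation together with Distributivity in the usual modal-logic style, but the Truth-axiom argument is the most direct here.) With consistency secured, Lindenbaum's lemma applies.

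Putting these pieces together, I would define a sequence $w_0\subsetneq w_1\subsetneq w_2\subsetneq\cdots$ recursively by setting $w_{n+1}=w_n\!::\!(\ast,Y_n)$ for a suitably chosen maximal consistent $Y_n$, with each $w_{n+1}\in W_{n+1}\subseteq W_\infty$. Since $w_{n+1}$ is strictly longer than $w_n$, the sequences $w_0,w_1,w_2,\dots$ are pairwise distinct, and they all belong to $W_\infty$. Therefore $W_\infty$ contains infinitely many elements, which is exactly the claim. The only background facts needed are the existence of a maximal consistent extension (Lindenbaum) and the Truth axiom already listed among the axioms of the system.
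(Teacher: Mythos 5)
Your proof is correct, but it takes a genuinely different route from the paper's. The paper observes that $W_0$ alone is already infinite: since $\nvdash\phi_0$, the consistent set $\{\neg\phi_0\}$ can be extended to a maximal consistent set in infinitely many ways (using the infinitely many propositional variables of the language), and each such extension $X$ contributes a distinct one-element sequence $(X)$ to $W_0\subseteq W_\infty$. You instead grow an infinite strictly ascending chain $w_0, w_1, w_2,\dots$ across the stages $W_0, W_1, W_2,\dots$ by repeatedly appending $(\ast, Y_n)$; the key consistency check --- that $\{\phi\mid\Box\phi\in hd(w)\}$ is consistent because the Truth axiom forces it to be a subset of the maximal consistent set $hd(w)$ --- is sound, and in fact you could even take $Y_n=hd(w_n)$ and skip Lindenbaum in the inductive step. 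The paper's argument is shorter and leans on the standing assumption that there are infinitely many propositional variables; yours is slightly longer but needs only a single consistent starting point and, as a by-product, shows that every world admits a $\ast$-successor, which is close in spirit to what Lemma~\ref{diamond lemma} later establishes. Both are valid proofs of the statement.
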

\begin{proof}
By Theorem~\ref{strong soundness}, all axioms of our logical system are satisfied, in particular, in the world of any single-world model with the empty evidence set. Thus, our logical system is consistent. Hence, the empty set of formulae is consistent. Since our language contains infinitely many propositional variables, the empty set could be extended to a maximal consistent set in an infinitely many ways by Lemma~\ref{jan29-a}. Hence, set $W_0$ is infinite by Definition~\ref{worlds and evidence definition}. Therefore, set $W_\infty$ is infinite by Definition~\ref{infty definition}.
\end{proof}

\begin{figure}[ht]
\begin{center}
\vspace{0mm}
\hfill
\scalebox{0.7}{\includegraphics{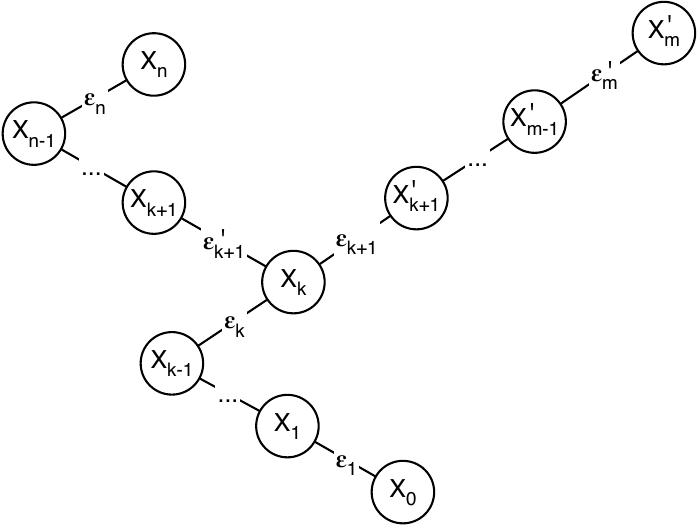}}
\hfill
\vspace{0mm}
\caption{Illustration for Definition~\ref{sim}, where $\epsilon_i$ and $\epsilon'_i$ either are equal to $\ast$ or contain $e$ for each $i\ge k$.}\label{equivalence figure (a)}
\vspace{0mm}
\end{center}
\vspace{-2mm}
\end{figure}
Informally, two sequences are $\sim_e$-equivalent if they start with the same prefix and once they deviate all subsequent $\epsilon$'s either are equal to $\ast$ or contain element $e$, see Figure~\ref{equivalence figure (a)}. The formal definition is below.
\begin{definition}\label{sim}
For any world $w=(X_0,\epsilon_1,X_1,\dots,\epsilon_n,X_n)$, any world $u=(X'_0,\epsilon'_1,X'_1,\dots,\epsilon'_m,X'_m)$, and any $e\in W_\infty$, let $w\sim_e u$ if there is $k$ such that 
\begin{enumerate}
    \item $0\le k\le \min\{n,m\}$,
    \item $X_i=X'_i$ for all $i$ such that $0\le i\le k$,
    \item $\epsilon_i=\epsilon'_i$ for all $i$ such that $0< i\le k$,
    \item for all $i$, if $k<i\le n$, then either $e\in \epsilon_i$ or $\epsilon_i=\ast$,
    \item for all $i$, if $k<i\le m$, then either $e\in \epsilon'_i$ or $\epsilon'_i=\ast$.
\end{enumerate}
\end{definition}

\begin{definition}\label{pi definition}
$\pi(p)=\{w\in W_\infty\;|\; p\in hd(w)\}$.
\end{definition}

\noindent{}The canonical evidence model $\<W_\infty,W_\infty,\{\sim_e\}_{e\in W_\infty},\pi\>$ is now fully defined.

\section{Completeness}

We start by establishing several properties of the canonical model that are necessary for our proof of the completeness.
First, we show that if set $hd(w)$ contains a formula $\boxdot\phi$, then so does set $hd(u)$ for each descendant $u$ of vertex $w$.

\begin{figure}[ht]
\begin{center}
\vspace{0mm}

\scalebox{0.7}{\includegraphics{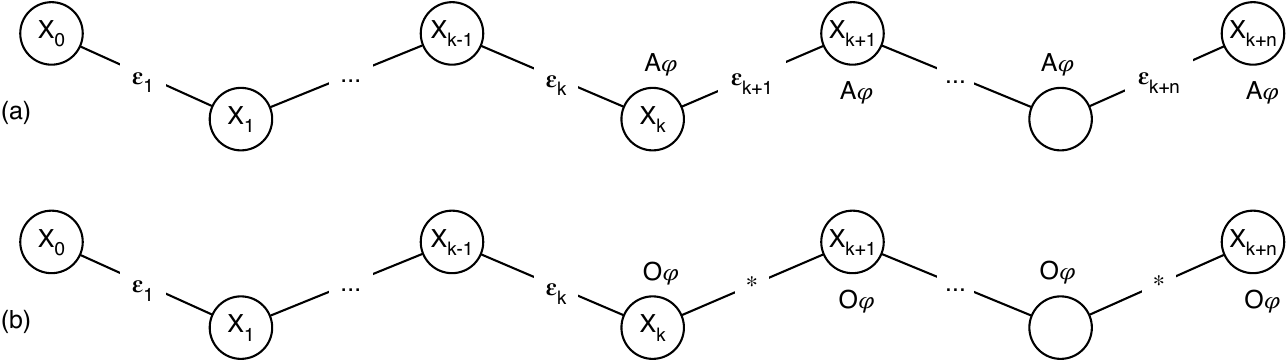}}

\vspace{0mm}
\caption{Illustrations: (a) for Lemma~\ref{boxdot up}, and (b) for Lemma~\ref{box up} and Lemma~\ref{box down}.}\label{equivalence figure}
\vspace{-4mm}
\end{center}
\vspace{0mm}
\end{figure}

\begin{lemma}\label{boxdot up}
For any $k\ge 0$, any $n\ge 0$, and any 
$(X_0,\epsilon_1,X_1,\dots,\epsilon_k,X_k,\epsilon_{k+1},X_{k+1},\dots,\epsilon_{k+n},X_{k+n})\in W_\infty,$
if $\boxdot\phi\in X_k$, then $\boxdot\phi\in X_{k+n}$.
\end{lemma}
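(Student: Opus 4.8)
The plan is to prove the statement by induction on $n$, the number of steps that separate vertex $X_k$ from its descendant $X_{k+n}$. This is the natural induction because $W_\infty$ is built up one extension at a time via Definition~\ref{worlds and evidence definition}, so adjacent vertices in a sequence are exactly the pairs whose relationship we can control directly through the defining clauses.

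First I would handle the base case $n=0$, which is immediate: the claim $\boxdot\phi\in X_k$ implies $\boxdot\phi\in X_{k+0}$ is a tautology since $X_{k+0}=X_k$. For the inductive step, I would assume the result for $n$ and consider a sequence ending in $\epsilon_{k+n+1},X_{k+n+1}$. The inductive hypothesis, applied to the prefix ending at $X_{k+n}$, gives $\boxdot\phi\in X_{k+n}$. It then remains to pass from $X_{k+n}$ to $X_{k+n+1}$ across a single edge labelled $\epsilon_{k+n+1}$, which by Definition~\ref{worlds and evidence definition} is either $\ast$ or a finite subset of $\bigcup_{i\le k+n}W_i$.

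The heart of the argument is this single-edge step, and it splits into the two cases for $\epsilon_{k+n+1}$. If $\epsilon_{k+n+1}$ is a finite set, then clause (e) of Definition~\ref{worlds and evidence definition} gives $\{\psi\;|\;\boxdot\psi\in hd(w)\}\subseteq X_{k+n+1}$ where $hd(w)=X_{k+n}$; since $\boxdot\phi\in X_{k+n}$, I must conclude $\phi\in X_{k+n+1}$, but I actually need $\boxdot\phi\in X_{k+n+1}$, which is stronger. The key observation to close this gap is that from $\boxdot\phi\in X_{k+n}$ one first derives $\boxdot\boxdot\phi\in X_{k+n}$ using Attainable Positive Introspection and maximal consistency of $X_{k+n}$; then applying clause (e) to the formula $\boxdot\phi$ in place of $\phi$ yields $\boxdot\phi\in X_{k+n+1}$ as desired. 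If instead $\epsilon_{k+n+1}=\ast$, then clause (d) gives $\{\psi\;|\;\Box\psi\in hd(w)\}\subseteq X_{k+n+1}$, so here I would first move from $\boxdot\phi$ to $\Box\boxdot\phi$ in $X_{k+n}$ (via Monotonicity together with Attainable Positive Introspection, i.e. $\boxdot\phi\to\boxdot\boxdot\phi\to\Box\boxdot\phi$) and then apply clause (d) to obtain $\boxdot\phi\in X_{k+n+1}$.

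I expect the main obstacle to be exactly this strengthening from $\phi$ to $\boxdot\phi$: the defining clauses only transport the \emph{contents} of a box across an edge, so a naive application would lose the outer $\boxdot$ and give only $\phi$, not $\boxdot\phi$. Overcoming this relies essentially on Attainable Positive Introspection (for finite-set edges) and on its combination with Monotonicity (for $\ast$ edges) to pre-package the formula as $\boxdot\boxdot\phi$ or $\Box\boxdot\phi$ inside $X_{k+n}$ before crossing the edge. Everything else is routine use of maximal consistency and the two relevant clauses of Definition~\ref{worlds and evidence definition}.
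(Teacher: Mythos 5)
Your proposal is correct and matches the paper's own proof essentially step for step: induction on $n$, with the key move of upgrading $\boxdot\phi$ to $\boxdot\boxdot\phi$ via Attainable Positive Introspection before crossing a finite-set edge (clause (e)), and further to $\Box\boxdot\phi$ via Monotonicity before crossing a $\ast$ edge (clause (d)). The only difference is the cosmetic one of phrasing the inductive step as $n\to n+1$ rather than $n-1\to n$.
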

\begin{proof}
We prove the lemma by induction on $n$, see Figure~\ref{equivalence figure} (a). If $n=0$, then assumption $\boxdot\phi\in X_k$ implies that $\boxdot\phi\in X_{k+n}$. If $n>0$, then $\boxdot\phi\in X_{k+n-1}$ by the induction hypothesis. Hence, $X_{k+n-1}\vdash\boxdot\boxdot\phi$ by the  Positive Introspection axiom.

\noindent{Case I:} $\epsilon_{k+n}=\ast$. Note that $X_{k+n-1}\vdash\boxdot\boxdot\phi$ implies  $X_{k+n-1}\vdash\Box\boxdot\phi$ by the Monotonicity axiom. Hence, $\Box\boxdot\phi\in X_{k+n-1}$ due to the maximality of set $X_{k+n-1}$. Then, $\boxdot\phi\in X_{k+n}$ by Definition~\ref{worlds and evidence definition} and due to the assumption $\epsilon_{k+n}=\ast$.

\noindent{Case II:} $\epsilon_{k+n}\neq\ast$. Statement $X_{k+n-1}\vdash\boxdot\boxdot\phi$ implies that $\boxdot\boxdot\phi\in X_{k+n-1}$ due to the maximality of set $X_{k+n-1}$. Thus, $\boxdot\phi\in X_{k+n}$ by Definition~\ref{worlds and evidence definition}.
\end{proof}

Next, we show that if $hd(w)$ contains a formula $\Box\phi$, then so does $hd(u)$ for each descendant $u$ of vertex $w$ reachable through edges all of which are labeled by symbol $\ast$.

\begin{lemma}\label{box up}
For any $k\ge 0$ and $n\ge 0$, and any 
$$(X_0,\epsilon_1,X_1,\dots,\epsilon_k,X_k,\epsilon_{k+1},X_{k+1},\dots,\epsilon_{k+n},X_{k+n})\in W_\infty,$$
if $\Box\phi\in X_k$ and $\epsilon_i=\ast$ for all $i$ such that $k<i\le k+n$, then $\Box\phi\in X_{k+n}$.
\end{lemma}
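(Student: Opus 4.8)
The plan is to mirror the proof of Lemma~\ref{boxdot up}, proceeding by induction on $n$, but now driving the propagation with the \emph{potential} Positive Introspection principle from Lemma~\ref{potential positive introspection lemma} rather than with Attainable Positive Introspection. A useful structural simplification here is that the hypothesis forces every label above level $k$ to equal $\ast$, so only a single case arises and there is no analogue of Case~II from the previous lemma.

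For the base case $n=0$ the claim is immediate, since then $X_{k+n}=X_k$ and $\Box\phi\in X_k$ by assumption. For the inductive step, I would first record that every prefix of a sequence in $W_\infty$ is again a member of $W_\infty$ by Definition~\ref{worlds and evidence definition}; in particular the truncation ending at $X_{k+n-1}$ belongs to $W_\infty$. Since $\Box\phi\in X_k$ and $\epsilon_i=\ast$ for all $i$ with $k<i\le k+n-1$ (a subset of the given $\ast$-labelling condition), the induction hypothesis applies to this shorter sequence and yields $\Box\phi\in X_{k+n-1}$.

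Next I would invoke Positive Introspection. By Lemma~\ref{potential positive introspection lemma} we have $\vdash\Box\phi\to\Box\Box\phi$, so $X_{k+n-1}\vdash\Box\Box\phi$, and hence $\Box\Box\phi\in X_{k+n-1}$ by the maximality of $X_{k+n-1}$. Because $\epsilon_{k+n}=\ast$ by assumption, the $\ast$-clause in Definition~\ref{worlds and evidence definition} (the requirement that $\{\psi\;|\;\Box\psi\in hd(w)\}\subseteq Y$ whenever $\epsilon=\ast$) guarantees that $\{\psi\;|\;\Box\psi\in X_{k+n-1}\}\subseteq X_{k+n}$. Applying this containment to the formula $\Box\phi$ gives $\Box\phi\in X_{k+n}$, which closes the induction.

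The only point where any care is needed is the first step of the inductive case: one must confirm that the truncated sequence is a legitimate element of $W_\infty$ and that it still satisfies both hypotheses of the lemma (namely $\Box\phi\in X_k$ and the $\ast$-labelling of the intervening edges), so that the induction hypothesis is genuinely applicable. Once that bookkeeping is settled, the remaining steps are the routine combination of Lemma~\ref{potential positive introspection lemma}, maximal consistency, and the $\ast$-clause of the world-construction definition; I do not anticipate any substantive obstacle beyond this.
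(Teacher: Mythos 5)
Your proof is correct and follows essentially the same route as the paper's: induction on $n$, with the inductive step driven by Lemma~\ref{potential positive introspection lemma} ($\vdash\Box\phi\to\Box\Box\phi$), maximality of $X_{k+n-1}$, and the $\ast$-clause of Definition~\ref{worlds and evidence definition}. The extra bookkeeping you flag about truncated sequences belonging to $W_\infty$ is a valid point that the paper leaves implicit, but it does not change the argument.
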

\begin{proof}
We prove the lemma by induction on $n$, see Figure~\ref{equivalence figure} (b). If $n=0$, then assumption $\Box\phi\in X_k$ implies that $\Box\phi\in X_{k+n}$. If $n>0$, then $\Box\phi\in X_{k+n-1}$ by the induction hypothesis. Hence, $X_{k+n-1}\vdash\Box\Box\phi$ by Lemma~\ref{potential positive introspection lemma}. Thus, $\Box\phi\in X_{k+n}$ by Definition~\ref{worlds and evidence definition} and due to the assumption $\epsilon_{k+n}=\ast$.
\end{proof}

The next lemma is a converse of Lemma~\ref{box up}. It shows that if $hd(u)$ contains a formula $\Box\phi$, where $u$ is a descendant of vertex $w$ reachable through edges all of which are labeled by symbol $\ast$, then so does $hd(w)$.

\begin{lemma}\label{box down}
For any $k\ge 0$, any $n\ge 0$, and any 
$$(X_0,\epsilon_1,X_1,\dots,\epsilon_k,X_k,\epsilon_{k+1},X_{k+1},\dots,\epsilon_{k+n},X_{k+n})\in W_\infty,$$
if $\Box\phi\in X_{k+n}$ and $\epsilon_i=\ast$ for all $i$ such that $k<i\le k+n$, then $\Box\phi\in X_k$.
\end{lemma}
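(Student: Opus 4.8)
The plan is to prove this by induction on $n$, mirroring the proof of Lemma~\ref{box up} but running the propagation of $\Box\phi$ in the opposite direction, towards the root of the tree. The base case $n=0$ is immediate, since then $X_{k+n}=X_k$ and the hypothesis $\Box\phi\in X_{k+n}$ is literally the desired conclusion $\Box\phi\in X_k$.

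For the inductive step, assuming $n>0$, I would first reduce the claim by one level: I would show that $\Box\phi\in X_{k+n-1}$, and then apply the induction hypothesis to the shorter prefix $(X_0,\epsilon_1,X_1,\dots,\epsilon_{k+n-1},X_{k+n-1})$, whose remaining labels $\epsilon_{k+1},\dots,\epsilon_{k+n-1}$ are still all equal to $\ast$, to conclude $\Box\phi\in X_k$. Thus everything comes down to propagating $\Box\phi$ one step down across the single $\ast$-labeled edge joining $X_{k+n-1}$ and $X_{k+n}$.

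To carry out this one-step descent I would argue by contradiction. Suppose $\Box\phi\notin X_{k+n-1}$; then $\neg\Box\phi\in X_{k+n-1}$ by maximality, and the Negative Introspection axiom together with Modus Ponens yields $\Box\neg\Box\phi\in X_{k+n-1}$. Since $\epsilon_{k+n}=\ast$ by assumption, condition (d) of Definition~\ref{worlds and evidence definition} guarantees $\{\psi\mid\Box\psi\in X_{k+n-1}\}\subseteq X_{k+n}$, so taking $\psi=\neg\Box\phi$ gives $\neg\Box\phi\in X_{k+n}$. This contradicts the hypothesis $\Box\phi\in X_{k+n}$ by the consistency of the maximal set $X_{k+n}$. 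Hence $\Box\phi\in X_{k+n-1}$, as required.

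The step I expect to be the crux is precisely this one-step descent, and it is exactly here that the S5 character of $\Box$ is essential: unlike Lemma~\ref{box up}, where pushing $\Box\phi$ upward across an $\ast$-edge uses only Positive Introspection (already available in S4 via Lemma~\ref{potential positive introspection lemma}), pulling $\Box\phi$ back down across an $\ast$-edge relies on Negative Introspection. This asymmetry is what makes the two lemmas genuinely different despite being formal converses, and it is the same reason an analogous ``downward'' statement would fail for the attainable modality $\boxdot$, whose logic lacks Negative Introspection.
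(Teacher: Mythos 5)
Your proof is correct and follows essentially the same route as the paper's: induction on $n$, with the one-step descent across a $\ast$-edge handled by maximality, Negative Introspection, and clause (d) of Definition~\ref{worlds and evidence definition}, leading to a contradiction with the consistency of $X_{k+n}$ (the paper phrases this as a two-case split rather than a direct contradiction, but the content is identical). Both arguments also implicitly use that the prefix $(X_0,\dots,X_{k+n-1})$ is itself in $W_\infty$, which is immediate from Definition~\ref{worlds and evidence definition}.
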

\begin{proof}
We prove the statement of the lemma by induction on $n$, see again Figure~\ref{equivalence figure} (b). If $n=0$, then assumption $\Box\phi\in X_{k+n}$ implies that $\Box\phi\in X_k$. In what follows we assume that $n>0$.

\noindent{Case I:} $\Box\phi\notin X_{k+n-1}$. Thus, $\neg\Box\phi\in X_{k+n-1}$ due to the maximality of the set $X_{k+n-1}$. Hence, $X_{k+n-1}\vdash \Box\neg\Box\phi$ by  the Negative Introspection axiom. Then, $\Box\neg\Box\phi\in X_{k+n-1}$ due to the maximality of the set $X_{k+n-1}$. Hence, $\neg\Box\phi\in X_{k+n}$ by Definition~\ref{worlds and evidence definition} and because $\epsilon_{k+n}=\ast$. Thus, $\Box\phi\notin X_{k+n}$ due to the consistency of the set $X_{k+n}$, which is a contradiction with the assumption of the lemma.

\noindent{Case II:} $\Box\phi\in X_{k+n-1}$. Thus, $\Box\phi\in X_k$ by the induction hypothesis.
\end{proof}

The next two lemmas are relatively standard lemmas for a completeness proof of a modal logic. Their proofs show how a sequence representing an epistemic world can be extended to produce different types of child nodes on the trees in Figure~\ref{forest figure}.

\begin{lemma}\label{diamond dot lemma}
For any $w\in W_\infty$, any $\neg\boxdot\phi\in hd(w)$, and any finite $F\subseteq W_\infty$, there is $u\in W_\infty$ such that $w\sim_F u$ and $\neg\phi\in hd(u)$. 
\end{lemma}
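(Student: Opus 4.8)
The plan is to obtain $u$ as an immediate child of $w$ in the forest of Figure~\ref{forest figure}; that is, to set $u = w\!::\!(\epsilon, Y)$ for a suitably chosen finite edge-label $\epsilon$ and a suitably chosen maximal consistent set $Y$. Write $w = (X_0,\epsilon_1,\dots,\epsilon_n,X_n)$, so that $hd(w) = X_n$. The point of making $u$ a child of $w$ is that $w$ and $u$ agree on the entire prefix $(X_0,\epsilon_1,\dots,\epsilon_n,X_n)$, and $u$ has exactly one additional edge, labelled $\epsilon$. Hence, taking $k = n$ in Definition~\ref{sim}, conditions (1)--(4) hold automatically (condition (4) vacuously, since $w$ has no edges beyond index $n$), and condition (5) reduces to the single requirement that $e\in\epsilon$ or $\epsilon=\ast$ for the one new edge. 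Thus, if I choose $\epsilon = F$, then $e\in\epsilon$ for every $e\in F$, so $w\sim_e u$ for every $e\in F$, i.e. $w\sim_F u$, as desired.

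It remains to produce a maximal consistent $Y$ that (i) contains $\neg\phi$, so that $\neg\phi\in hd(u)$, and (ii) contains $\{\psi\mid \boxdot\psi\in hd(w)\}$, so that $w\!::\!(F,Y)$ satisfies the $\boxdot$-clause of Definition~\ref{worlds and evidence definition} (the one requiring $\{\psi\mid\boxdot\psi\in hd(w)\}\subseteq Y$ when $\epsilon$ is a finite subset) and is therefore legitimate. It suffices to show that the set $\{\psi\mid\boxdot\psi\in hd(w)\}\cup\{\neg\phi\}$ is consistent, and this is the main obstacle. I would argue by contradiction: if the set were inconsistent, then finitely many formulae $\boxdot\psi_1,\dots,\boxdot\psi_m\in hd(w)$ would satisfy $\vdash(\psi_1\wedge\dots\wedge\psi_m)\to\phi$. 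Applying Attainable Necessitation to the curried form $\psi_1\to(\psi_2\to\dots\to(\psi_m\to\phi))$ and then Attainable Distributivity together with Modus Ponens repeatedly would yield $\vdash(\boxdot\psi_1\wedge\dots\wedge\boxdot\psi_m)\to\boxdot\phi$. Since each $\boxdot\psi_i$ belongs to the maximal consistent set $hd(w)$, this would force $\boxdot\phi\in hd(w)$, contradicting $\neg\boxdot\phi\in hd(w)$ together with the consistency of $hd(w)$.

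Granting consistency, I would extend $\{\psi\mid\boxdot\psi\in hd(w)\}\cup\{\neg\phi\}$ to a maximal consistent set $Y$ in the standard way. Finally I would check that $u = w\!::\!(F,Y)$ genuinely lies in $W_\infty$: since $w\in W_\infty$ and $F$ is a \emph{finite} subset of $W_\infty=\bigcup_i W_i$, there is a stage $n$ large enough that both $w\in\bigcup_{i\le n}W_i$ and $F\subseteq\bigcup_{i\le n}W_i$; the conditions of Definition~\ref{worlds and evidence definition} then place $u$ in $W_{n+1}\subseteq W_\infty$. As $\neg\phi\in Y = hd(u)$ and $w\sim_F u$ by the first paragraph, this $u$ witnesses the lemma.
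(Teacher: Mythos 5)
Your proposal is correct and follows essentially the same route as the paper's own proof: the same consistency argument for $\{\neg\phi\}\cup\{\psi\mid\boxdot\psi\in hd(w)\}$ via Attainable Necessitation and Attainable Distributivity, the same construction $u=w\!::\!(F,Y)$, and the same use of the finiteness of $F$ to place $u$ in some $W_{n+1}$. The only difference is that you spell out the verification of $w\sim_F u$ from Definition~\ref{sim} in more detail than the paper does, which is harmless.
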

\begin{proof}
We first show that the following set is consistent:
$
Y_0=\{\neg\phi\}\cup\{\psi\;|\; \boxdot\psi\in hd(w)\}.
$
Assume the opposite. Thus, there must exist $\boxdot\psi_1, \dots, \boxdot\psi_n\in hd(w)$ such that
$
\psi_1,\dots,\psi_n\vdash\phi.
$
Hence, by the deduction theorem for propositional logic,
$
\vdash\psi_1\to(\psi_2\to\dots(\psi_n\to\phi)\dots).
$
Then, by the  Necessitation inference rule,
$
\vdash\boxdot(\psi_1\to(\psi_2\to\dots(\psi_n\to\phi)\dots)).
$
By the  Distributivity axiom and the Modus Ponens inference rule,
$
\vdash\boxdot\psi_1\to\boxdot(\psi_2\to\dots(\psi_n\to\phi)\dots)).
$
By the Modus Ponens inference rule,
$
\boxdot\psi_1\vdash\boxdot(\psi_2\to\dots(\psi_n\to\phi)\dots)).
$
By repeating the last two steps $n-1$ times,
$
\boxdot\psi_1,\dots,\boxdot\psi_n\vdash\boxdot\phi.
$
Hence, $hd(w)\vdash \boxdot\phi$ by the choice of formulae $\psi_1,\dots,\psi_n$. Thus, $\neg\boxdot\phi\notin hd(w)$ due to the consistency of the set $hd(w)$, which  contradicts the assumption of the lemma. Therefore, set $Y_0$ is consistent. 

Let $Y$ be any maximal consistent extension of set $Y_0$ and let $u$ be the sequence $w\!::\!(F,Y)$.
We next show that $u \in W_\infty$. Indeed, since $w\in W_\infty$, by Definition~\ref{infty definition}, there must exist $n_1\ge 0$ such that $w\in W_{n_1}$. At the same time, since set $F$ is a {\em finite} subset of $W_\infty$, by Definition~\ref{infty definition}, there must exist $n_2\ge 0$ such that $F\subseteq \bigcup_{i\le n_2}W_i$. Let $n=\max\{n_1,n_2\}$. Thus, $w\in \bigcup_{i\le n}W_i$ and $F\subseteq\bigcup_{i\le n}W_i$. Hence, $w\!::\!(F,Y)\in W_{n+1}$ by Definition~\ref{worlds and evidence definition}. Therefore, $u=w\!::\!(F,Y)\in W_\infty$ by Definition~\ref{infty definition}.  
Finally, $w\sim_F u$ by Definition~\ref{sim}. 
To finish the proof of the lemma, note that $\neg\phi\in Y_0\subseteq Y=hd(u)$.
\end{proof}

\begin{lemma}\label{diamond lemma}
For any  $w\in W_\infty$ and any $\neg\Box\phi\in hd(w)$, there is $u\in W_\infty$ such that $w\sim_{W_\infty} u$ and $\neg\phi\in hd(u)$. 
\end{lemma}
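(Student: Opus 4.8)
The plan is to mirror the proof of Lemma~\ref{diamond dot lemma}, replacing $\boxdot$ by $\Box$ and taking the new edge label to be the symbol $\ast$ rather than a finite set. First I would form the set
$$
Y_0=\{\neg\phi\}\cup\{\psi\;|\; \Box\psi\in hd(w)\}
$$
and show that it is consistent. Assuming otherwise, there are $\Box\psi_1,\dots,\Box\psi_n\in hd(w)$ with $\psi_1,\dots,\psi_n\vdash\phi$, so by the deduction theorem $\vdash\psi_1\to(\psi_2\to\dots(\psi_n\to\phi)\dots)$. The one place where this differs from Lemma~\ref{diamond dot lemma} is that our system has no Necessitation rule for $\Box$: I would therefore apply Attainable Necessitation to obtain $\vdash\boxdot(\psi_1\to\dots)$ and then Monotonicity axiom together with Modus Ponens to lift this to $\vdash\Box(\psi_1\to\dots)$. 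Repeated application of the Distributivity axiom and Modus Ponens then yields $\Box\psi_1,\dots,\Box\psi_n\vdash\Box\phi$, hence $hd(w)\vdash\Box\phi$ and, by maximality, $\Box\phi\in hd(w)$, contradicting $\neg\Box\phi\in hd(w)$. This establishes the consistency of $Y_0$.

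Having established consistency, I would extend $Y_0$ to a maximal consistent set $Y$ and set $u=w\!::\!(\ast,Y)$. To verify that $u\in W_\infty$, I would pick $n$ with $w\in\bigcup_{i\le n}W_i$; clause (d) of Definition~\ref{worlds and evidence definition}, covering the case $\epsilon=\ast$, requires precisely $\{\psi\;|\;\Box\psi\in hd(w)\}\subseteq Y$, which holds by construction, so $u\in W_{n+1}\subseteq W_\infty$ by Definition~\ref{infty definition}.

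The step I expect to require the most care is verifying $w\sim_{W_\infty} u$, that is, $w\sim_e u$ for every single $e\in W_\infty$ simultaneously. Taking the agreement index $k$ of Definition~\ref{sim} to equal $n$, the length of $w$, the prefix conditions hold trivially, the tail condition on $w$ is vacuous, and the only tail entry of $u$ is the fresh label $\ast$; since condition~5 of Definition~\ref{sim} is satisfied whenever the relevant label equals $\ast$, irrespective of $e$, the equivalence holds uniformly in $e$. This is exactly where the choice of $\ast$ instead of a finite set is essential: a finite label could never contain every element of the infinite set $W_\infty$ (Lemma~\ref{infinity is infinite}), so no finite label would yield equivalence for all $e$ at once. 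Finally, $\neg\phi\in Y_0\subseteq Y=hd(u)$, which completes the argument.
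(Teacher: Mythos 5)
Your proposal is correct and follows essentially the same route as the paper's own proof: the same consistency argument for $Y_0$ via Attainable Necessitation, Monotonicity, and Distributivity, the same extension $u=w\!::\!(\ast,Y)$, and the same observation that the label $\ast$ makes $w\sim_e u$ hold for every $e\in W_\infty$ simultaneously. Your added remark explaining why a finite label could not work is a nice touch but not a departure from the paper's argument.
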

\begin{proof}
We first show that the following set is consistent:
$
Y_0=\{\neg\phi\}\cup\{\psi\;|\; \Box\psi\in hd(w)\}.
$
Assume the opposite. Thus, there must exist formulae $\Box\psi_1, \dots, \Box\psi_n\in hd(w)$ and
$
\psi_1,\dots,\psi_n\vdash\phi.
$
Hence, by the deduction theorem for the propositional logic,
$
\vdash\psi_1\to(\psi_2\to\dots(\psi_n\to\phi)\dots).
$
Then, by the  Necessitation inference rule,
$
\vdash\boxdot(\psi_1\to(\psi_2\to\dots(\psi_n\to\phi)\dots)).
$
By the Monotonicity axiom and the Modus Ponens Inference rule,
$
\vdash\Box(\psi_1\to(\psi_2\to\dots(\psi_n\to\phi)\dots)).
$
By  the Distributivity axiom and the Modus Ponens inference rule,
$
\vdash\Box\psi_1\to\Box(\psi_2\to\dots(\psi_n\to\phi)\dots).
$
By the Modus Ponens inference rule,
$
\Box\psi_1\vdash\Box(\psi_2\to\dots(\psi_n\to\phi)\dots).
$
By repeating the last two steps $n-1$ times,
$
\Box\psi_1,\dots,\Box\psi_n\vdash\Box\phi.
$
Hence, $hd(w)\vdash \Box\phi$ by the choice of formulae $\psi_1,\dots,\psi_n$. Thus, $\neg\Box\phi\notin hd(w)$ due to the consistency of the set $hd(w)$, which contradicts the assumption of the lemma. Therefore, set $Y_0$ is consistent. 

Let $Y$ be any maximal consistent extension of set $Y_0$ and let $u$ be sequence $w\!::\!(\ast,Y)$. We next show that $u \in W_\infty$.  Indeed, since $w\in W_\infty$, by Definition~\ref{infty definition}, there must exist $n\ge 0$ such that $w\in W_{n}$. Hence, $w\!::\!(\ast,Y)\in W_{n+1}$ by Definition~\ref{worlds and evidence definition}. Therefore, $u=w\!::\!(\ast,Y)\in W_\infty$ by Definition~\ref{infty definition}.

Finally, let us observe that $w\sim_e u$ for each $e\in W_\infty$ by Definition~\ref{sim}. Thus, $w\sim_{W_\infty} u$.
To finish the proof of the lemma, note that $\neg\phi\in Y_0\subseteq Y=hd(u)$.
\end{proof}

By Definition~\ref{worlds and evidence definition}, elements of set $W_\infty$ are sequences of the form $(X_0,\epsilon_1,X_1,\dots,\epsilon_k,X_k)$,  where $\epsilon_i$ is either symbol $\ast$ or a subset of $W_\infty$. One might wonder if elements of $W_\infty$ are wellfounded. In other words, is it possible for an element $w\in W_\infty$ to be a member of one of its own $\epsilon_i$? Lemma~\ref{well-foundedness lemma} shows that such elements do not exist. This is a very important observation for our proof of completeness. 
Lemma~\ref{no time machine lemma} is essentially a different form of Lemma~\ref{well-foundedness lemma} which is easier to prove by induction. 

\begin{lemma}\label{no time machine lemma}
For any $k\ge 0$, any $n\ge 0$, any $t\ge 0$, and any $w=(X_0,\epsilon_1,X_1,\dots,\epsilon_k,X_k,\epsilon_{k+1},X_{k+1},\dots$, $\epsilon_{k+n},X_{k+n})\in W_t$, if $\epsilon_k\neq \ast$, then $\epsilon_k\subseteq \bigcup_{i=0}^{t-1}W_i$.
\end{lemma}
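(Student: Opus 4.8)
The plan is to prove the lemma by induction on $n$, the number of entries appended after position $k$, keeping the stage $t$ and the position $k$ universally quantified. All the content lies in inverting the recursive construction of Definition~\ref{worlds and evidence definition}: a sequence belongs to $W_t$ with $t\ge 1$ exactly when it can be written as $w'\!::\!(\epsilon,Y)$ for some prefix $w'\in\bigcup_{i\le t-1}W_i$, some maximal consistent $Y$, and some last label $\epsilon$ that is either $\ast$ or a finite subset of $\bigcup_{i\le t-1}W_i$. Since there is no $\epsilon_k$ in the sequence when $k=0$, the statement is vacuous there, so I assume $k\ge 1$.

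For the base case $n=0$, the label $\epsilon_k$ is the last label of $w$, so $w=w'\!::\!(\epsilon_k,X_k)$ with $w'=(X_0,\dots,\epsilon_{k-1},X_{k-1})$. As $w$ has at least two set-entries it cannot lie in $W_0$, hence $t\ge 1$, and Definition~\ref{worlds and evidence definition} applied with $n'=t-1$ forces $\epsilon_k$ to be $\ast$ or a finite subset of $\bigcup_{i\le t-1}W_i$. The hypothesis $\epsilon_k\ne\ast$ then yields $\epsilon_k\subseteq\bigcup_{i=0}^{t-1}W_i$ at once.

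For the inductive step $n\ge 1$, I write $w=w'\!::\!(\epsilon_{k+n},X_{k+n})$, so that $w'=(X_0,\dots,\epsilon_k,X_k,\dots,\epsilon_{k+n-1},X_{k+n-1})$ has exactly $n-1$ entries after position $k$ and still carries the same label $\epsilon_k\ne\ast$. Again $t\ge 1$, so Definition~\ref{worlds and evidence definition} gives $w'\in\bigcup_{i\le t-1}W_i$; I then fix some $t'\le t-1$ with $w'\in W_{t'}$. Applying the induction hypothesis to $w'$ with parameter $n-1$ and stage $t'$ gives $\epsilon_k\subseteq\bigcup_{i=0}^{t'-1}W_i$, and since $t'-1\le t-1$ this set is contained in $\bigcup_{i=0}^{t-1}W_i$, which closes the induction.

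The one point needing care, and the main potential obstacle, is that a single sequence may belong to $W_t$ for several values of $t$, because the recursive clause only demands that the prefix have appeared by some earlier stage rather than at a unique one. Consequently the inductive step yields only $w'\in W_{t'}$ for some $t'\le t-1$, not $t'=t-1$. This is harmless: the conclusion $\epsilon_k\subseteq\bigcup_{i=0}^{t-1}W_i$ is monotone in $t$, so the a priori weaker bound $\epsilon_k\subseteq\bigcup_{i=0}^{t'-1}W_i$ supplied by the hypothesis still delivers the claim. Keeping this index bookkeeping $t'\le t-1$ straight is essentially the whole argument.
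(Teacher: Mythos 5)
Your proof is correct and follows essentially the same route as the paper: induction on $n$, inverting the recursive clause of Definition~\ref{worlds and evidence definition} to peel off the last entry. You are in fact slightly more careful than the paper's own proof, which asserts that the prefix lies in $W_{t-1}$ where the definition only guarantees membership in $\bigcup_{i\le t-1}W_i$; your observation that the conclusion is monotone in $t$ closes that small gap cleanly.
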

\begin{proof}
We prove this statement by induction on $n$. First, let $n=0$. Thus, $w=(X_0,\epsilon_1,X_1,\dots,\epsilon_k,X_k)$. Hence, By Definition~\ref{worlds and evidence definition}, assumptions $w\in W_t$ and $\epsilon_k\neq \ast$ imply that $\epsilon_k\subseteq \bigcup_{i=0}^{t-1}W_i$.

Suppose now that $n>0$. By Definition~\ref{worlds and evidence definition}, the assumption $w\in W_t$  implies that $$(X_0,\epsilon_1,X_1,\dots,\epsilon_k,X_k,\epsilon_{k+1},X_{k+1},\dots,\epsilon_{k+n-1},X_{k+n-1})\in W_{t-1}.$$
Thus, by the induction hypothesis, $\epsilon_k\subseteq \bigcup_{i=0}^{t-2}W_i$.
Therefore, $\epsilon_k\subseteq \bigcup_{i=0}^{t-1}W_i$.
\end{proof}



\begin{lemma}\label{well-foundedness lemma}
$w\notin \epsilon_k$ for each $w=(X_0,\epsilon_1,X_1,\dots,\epsilon_n,X_n)\in W_\infty$ and each  $k\le n$.
\end{lemma}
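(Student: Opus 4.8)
The plan is to derive this lemma directly from Lemma~\ref{no time machine lemma}, using the fact that $w$ is first created at some definite stage of the recursion in Definition~\ref{worlds and evidence definition}. Fix $w=(X_0,\epsilon_1,X_1,\dots,\epsilon_n,X_n)\in W_\infty$ and an index $k\le n$. Since $w\in W_\infty=\bigcup_i W_i$ by Definition~\ref{infty definition}, the set $\{\,i\ge 0\mid w\in W_i\,\}$ is a nonempty set of natural numbers, so it has a least element; let $t$ denote this least element. The whole point of choosing $t$ minimal is that it guarantees $w\notin W_i$ for every $i<t$, i.e. $w\notin\bigcup_{i=0}^{t-1}W_i$.

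Next I would split on whether $\epsilon_k$ is the symbol $\ast$. If $\epsilon_k=\ast$, then $\epsilon_k$ is not a set at all, so $w\notin\epsilon_k$ holds trivially. If $\epsilon_k\neq\ast$, then I would view $w$ as the sequence $(X_0,\epsilon_1,X_1,\dots,\epsilon_k,X_k,\epsilon_{k+1},X_{k+1},\dots,\epsilon_n,X_n)$, which is exactly the shape required by Lemma~\ref{no time machine lemma} with its ``$k$'' equal to our $k$ and its ``$n$'' equal to $n-k$ (the number of positions following $X_k$), and with stage parameter $t$, since $w\in W_t$. That lemma then yields $\epsilon_k\subseteq\bigcup_{i=0}^{t-1}W_i$.

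Combining the two observations finishes the argument: in the nontrivial case we have $\epsilon_k\subseteq\bigcup_{i=0}^{t-1}W_i$ while $w\notin\bigcup_{i=0}^{t-1}W_i$, and therefore $w\notin\epsilon_k$, as claimed.

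I do not expect a serious obstacle in this proof; it is essentially a repackaging of Lemma~\ref{no time machine lemma}. The one step that genuinely requires care is the selection of the \emph{least} stage $t$ with $w\in W_t$: this is precisely what ensures $w\notin\bigcup_{i=0}^{t-1}W_i$, which is what turns the containment $\epsilon_k\subseteq\bigcup_{i=0}^{t-1}W_i$ into the desired non-membership. Had one instead used an arbitrary stage $t$ with $w\in W_t$, the containment supplied by Lemma~\ref{no time machine lemma} would still hold but could fail to exclude $w$ itself, and the argument would break down.
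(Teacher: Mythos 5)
Your proposal is correct and follows essentially the same route as the paper: pick the \emph{least} stage $t$ with $w\in W_t$ (so $w\notin\bigcup_{i=0}^{t-1}W_i$), invoke Lemma~\ref{no time machine lemma} to get $\epsilon_k\subseteq\bigcup_{i=0}^{t-1}W_i$, and conclude $w\notin\epsilon_k$. Your explicit handling of the trivial case $\epsilon_k=\ast$ is a small extra care the paper omits, but it does not change the argument.
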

\begin{proof}
By Definition~\ref{infty definition}, assumption $w\in W_\infty$ implies that there is $t\ge 0$ such that $w\in W_t$. Let $m$ be the smallest $m$ such that $w\in W_m$. Thus, $w\notin \bigcup_{i=0}^{m-1}W_i$. 
At the same time, $\epsilon_k\subseteq\bigcup_{i=0}^{m-1}W_i$ by Lemma~\ref{no time machine lemma}. Therefore, $w\notin \epsilon_k$. 
\end{proof}

The next lemma puts together the pieces of the proof that we have developed. It connects the satisfaction of a formula in an epistemic world of the canonical evidence model with the maximal consistent sets out of which the world is constructed.

\begin{lemma}\label{iff lemma}
$w\Vdash\phi$ iff $\phi\in hd(w)$, for each epistemic world $w\in W_\infty$ and each formula $\phi\in\Phi$.
\end{lemma}
\begin{proof}
We prove the lemma by induction on the structural complexity of formula $\phi$. If formula $\phi$ is an atomic proposition, then the required follows from Definition~\ref{pi definition} and Definition~\ref{sat}. If formula $\phi$ is a negation or an implication, then the required follows from Definition~\ref{sat} and the maximality and the consistency of the set $hd(w)$ in the standard way. 

Suppose now that formula $\phi$ has the form $\boxdot\psi$.

\noindent{$(\Rightarrow)$} If $\boxdot\psi\notin hd(w)$, then $\neg\boxdot\psi\in hd(w)$ due to the maximality of the set $hd(w)$. To prove $w\nVdash\boxdot\psi$, by Definition~\ref{sat}, we need to show that for any finite set $F\subseteq W_\infty$ there is $u\in W_\infty$ such that $w\sim_F u$ and $u\nVdash\psi$. Indeed, by Lemma~\ref{diamond dot lemma}, there is $u\in W_\infty$ such that $w\sim_F u$ and $\neg\psi\in hd(u)$. Thus, $\psi\notin hd(u)$ due to the consistency of the set $hd(u)$. Therefore, $u\nVdash\psi$ by the induction hypothesis.

\noindent{$(\Leftarrow)$} Suppose that $\boxdot\psi\in hd(w)$. Thus, $hd(w)\vdash\boxdot\boxdot\psi$ by the Positive Introspection axiom. Hence, $hd(w)\vdash\Box\boxdot\psi$ by the Monotonicity axiom. Then, 
\begin{equation}\label{box boxdot eq}
    \Box\boxdot\psi\in hd(w)
\end{equation}
due to the  maximality of the set $hd(w)$.

\begin{figure}[ht]
\begin{center}
\vspace{0mm}
\scalebox{0.7}{\includegraphics{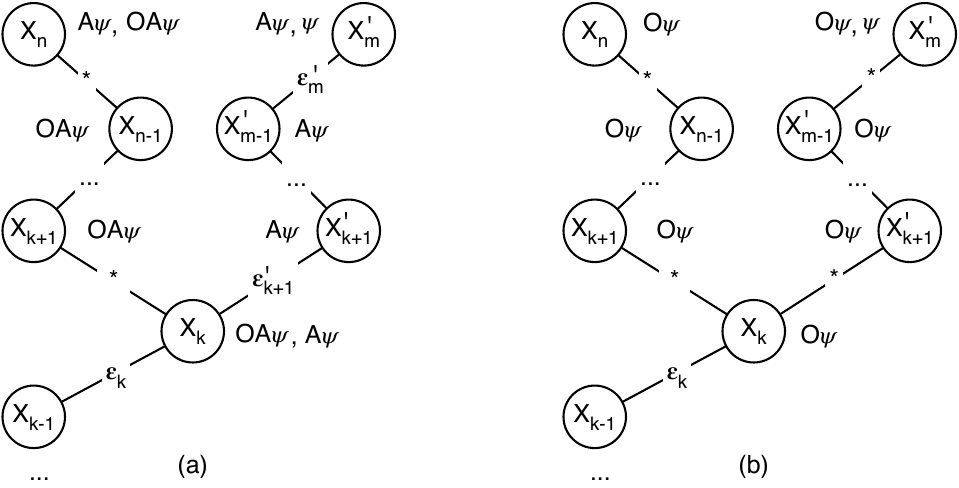}}
\vspace{-2mm}
\caption{Illustrations for the proof of Lemma~\ref{iff lemma}. }\label{iff lemma figure}
\vspace{-4mm}
\end{center}
\vspace{0mm}
\end{figure}

By Definition~\ref{sat}, it suffices to show that $u\Vdash\psi$ for all $u\in W_\infty$ such that $w\sim_{w}u$. By Definition~\ref{sim}, assumption $w\sim_{w}u$ implies that if
$w=(X_0,\epsilon_1,X_1,\dots,\epsilon_n,X_n)$
and
$u=(X'_0,\epsilon'_1,X'_1,\dots,\epsilon'_m,X'_m)$,
then there is $k$ such that, 
\begin{enumerate}
    \item $0\le k\le \min\{n,m\}$,
    \item $X_i=X'_i$ for all $i$ such that $0\le i\le k$,
    \item $\epsilon_i=\epsilon'_i$ for all $i$ such that $0< i\le k$,
    \item for all $i$, if $k<i\le n$, then either $w\in \epsilon_i$ or $\epsilon_i=\ast$,
    \item for all $i$, if $k<i\le m$, then either $w\in \epsilon'_i$ or $\epsilon'_i=\ast$.
\end{enumerate}
By Lemma~\ref{well-foundedness lemma}, $w\notin \epsilon_i$ for each $i\le n$. Thus, from the condition 4 above, $\epsilon_i=\ast$ for each $i$ such $k<i\le n$, see Figure~\ref{iff lemma figure} (a).
Hence, by Lemma~\ref{box down}, statement (\ref{box boxdot eq}) implies that $\Box\boxdot\psi\in X_k$. Thus, $X_k\vdash\boxdot\psi$ by the  Truth axiom. Then, $\boxdot\psi\in X_k$ due to the maximality of set $X_k$. Therefore, $\psi\in X'_m=hd(u)$ by Lemma~\ref{boxdot up} and condition 5 above. Therefore, $u\Vdash\psi$ by the induction hypothesis.

Finally, let formula $\phi$ have the form $\Box\psi$.

\noindent{$(\Rightarrow)$} If $\Box\psi\notin hd(w)$, then $\neg\Box\psi\in hd(w)$ due to the maximality of the set $hd(w)$. To prove $w\nVdash\Box\psi$, by Definition~\ref{sat}, we need to show that $u\nVdash\psi$ for some $u\in W_\infty$ such that $w\sim_{W_\infty} u$. Indeed, by Lemma~\ref{diamond lemma}, there is $u\in W_\infty$ such that $w\sim_{W_\infty} u$ and $\neg\psi\in hd(u)$. Thus, $\psi\notin hd(u)$ due to the consistency of the set $hd(u)$. Therefore, $u\nVdash\psi$ by the induction hypothesis.

\noindent{$(\Leftarrow)$} Suppose that $\Box\psi\in hd(w)$. By Definition~\ref{sat}, it suffices to show that $u\Vdash\psi$ for all $u\in W_\infty$ such that $w\sim_{W_\infty}u$. Indeed, let
$w=(X_0,\epsilon_1,X_1,\dots,\epsilon_n,X_n)$
and
$u=(X'_0,\epsilon'_1,X'_1,\dots,\epsilon'_m,X'_m)$. 
By Definition~\ref{sim}, assumption $w\sim_{W_\infty}u$ implies that for each $e\in W_\infty$ there is integer $k_e$ such that
\begin{enumerate}
    \item $0\le k_e\le \min\{n,m\}$,
    \item $X_i=X'_i$ for all $i$ such that $0\le i\le k_e$,
    \item $\epsilon_i=\epsilon'_i$ for all $i$ such that $0< i\le k_e$,
    \item for all $i$, if $k_e<i\le n$, then either $w\in \epsilon_i$ or $\epsilon_i=\ast$,
    \item for all $i$, if $k_e<i\le m$, then either $w\in \epsilon'_i$ or $\epsilon'_i=\ast$.
\end{enumerate}
By Lemma~\ref{infinity is infinite}, set $W_\infty$ is infinite and, thus, it is nonempty. Hence, set $\{k_e\;|\;e\in W_\infty\}$ is nonempty. Set  $\{k_e\;|\;e\in W_\infty\}$ is finite because $0\le k_e\le \min\{n,m\}$ for each $e\in W_\infty$. Let $k$ be the maximal element of the set $\{k_e\;|\;e\in W_\infty\}$. Hence, 
\begin{enumerate}
    \item $0\le k\le \min\{n,m\}$,
    \item $X_i=X'_i$ for all $i$ such that $0\le i\le k$,
    \item $\epsilon_i=\epsilon'_i$ for all $i$ such that $0< i\le k$,
    \item for all $i$, if $k<i\le n$, then either $W_\infty\subseteq \epsilon_i$ or $\epsilon_i=\ast$,
    \item for all $i$, if $k<i\le m$, then either $W_\infty\subseteq \epsilon'_i$ or $\epsilon'_i=\ast$.
\end{enumerate}
By Lemma~\ref{infinity is infinite}, set $W_\infty$ is infinite yet sets $\epsilon'_i$ and $\epsilon_i$ are finite for each $i$ by Definition~\ref{worlds and evidence definition}. Hence, $W_\infty\nsubseteq \epsilon_i$  and $W_\infty\nsubseteq \epsilon'_i$ for each $i$. Thus, conditions 4 and 5 above imply that $\epsilon_i=\ast$ for all $i$ such that $k<i\le n$ and $\epsilon_i=\ast$ for all $i$ such that $k<i\le n$, see Figure~\ref{iff lemma figure} (b). Thus, by Lemma~\ref{box down}, assumption $\Box\psi\in hd(w)=X_n$ implies that $\Box\psi\in X_{k}=X'_{k}$. Therefore,  $\psi\in X'_m=hd(u)$, by Lemma~\ref{box up}.
\end{proof}

\vspace{1mm}

We are now ready to finish the proof of Theorem~\ref{completeness theorem}. Set $X\cup \{\neg\phi\}$ is consistent by the assumption $X\nvdash\phi$ of the theorem. Consider any maximal consistent extension $X_0$  of set $X\cup \{\neg\phi\}$. Let $w_0$ be the single-element sequence $(X_0)$. Thus,  $w_0\Vdash\chi$ for each formula $\chi\in X$ and $w_0\nVdash\phi$ by Lemma~\ref{iff lemma}. This concludes the proof of the theorem.

\section{Conclusion}\label{conclusion section}

In this paper we have shown that knowledge obtained from infinitely many pieces of evidence has properties captured by modal logic S5 and knowledge obtained from finitely many pieces of evidence has properties described by modal logic S4. The main technical result is a sound and complete propositional bi-modal logic that captures properties of both of these types of knowledge and their interplay. 

A natural next step is to consider first-order logic with the same two modalities. Note that the knowledge modality $\Box$ satisfies Barcan Formula~\cite{b46jsl} $\forall x\,\Box\phi\to\Box\,\forall x\, \phi$ because this formula is derivable from S5 axioms stated in the first-order modal language~\cite{p56jsl}. At the same time, attainable knowledge modality $\boxdot$ does not satisfy Barcan Formula $\forall x\,\boxdot\phi\to\boxdot\,\forall x\, \phi$. Indeed, if variable $x$ ranges over an infinite domain and each value of $x$ in this domain has a distinct single evidence $e_x$ that justifies $\phi(x)$, then  $\forall x\,\boxdot\phi$ is true, but $\boxdot\,\forall x\, \phi$ is not. A complete axiomatization of the interplay of these two modalities in the first-order language remains an open problem.

\vspace{0mm}

\bibliographystyle{eptcs}
\bibliography{sp}

\end{document}